\newcommand{\R}{\mathcal R}
\newcommand{\Q}{{\mathcal Q}}
\newcommand{\Z}{{\mathcal Z}}
\newcommand{\W}{{\mathbb W}}
\newcommand{\eat}[1]{}
\renewcommand {\S}{\mathcal S}
\renewcommand {\W}{\mathcal W}
\renewcommand {\P}{\mathcal P}
\newcommand {\V}{\mathcal V}
\newcommand{\F}{\mathcal{F}}
\newcommand{\C}{\mathcal{C}}
\renewcommand{\L}{\mathbf{L}}
\begin{document}

\title{Two-sided Facility Location}
\author{Reza Alijani}
\affiliation{
    \institution{Duke University}
    \department{Computer Science}
    \city{Durham}
    \state{NC}
    \postcode{}
    \country{USA}}   
    \author{Siddhartha Banerjee}
\affiliation{
    \institution{Cornell University }
    \department{ORIE}
    \city{Ithaca}
    \state{NY}
    \postcode{}
    \country{USA}}
        \author{Sreenivas Gollapudi}
\affiliation{
    \institution{Google Research}
    \department{}
    \city{}
    \state{CA}
    \postcode{}
    \country{USA}}
            \author{Kostas Kollias}
\affiliation{
    \institution{Google Research}
    \department{}
    \city{}
    \state{CA}
    \postcode{}
    \country{USA}}

\author{Kamesh Munagala}
\affiliation{
    \institution{Duke University}
    \department{Computer Science}
    \city{Durham}
    \state{NC}
    \postcode{}
    \country{USA}}

%\title{Pricing in Dynamic Two-sided Markets with Service Guarantees}
%\author{Submission 40}

%--------------------------------------------------------%

\begin{abstract}
Recent years have witnessed the rise of many successful e-commerce marketplace platforms like the Amazon marketplace, AirBnB, Uber/Lyft, and Upwork, where a central platform mediates economic transactions between buyers and sellers. Motivated by these platforms, we formulate a set of facility location problems that we term {\em Two-sided Facility location}. In our model, agents arrive at nodes in an underlying metric space, where the metric distance between any buyer and seller captures the quality of the corresponding match. The platform posts prices and wages at the nodes, and opens a set of facilities to route the agents to. The agents at any facility are assumed to be matched. The platform ensures high match quality by imposing a distance constraint between a node and the facilities it is routed to. It ensures high service availability by ensuring flow to the facility is at least a pre-specified lower bound. Subject to these constraints, the goal of the platform is to maximize the social surplus (or gains from trade) subject to weak budget balance, {\em i.e.}, profit being non-negative. 

We present an approximation algorithm for this problem that yields a $(1 + \epsilon)$ approximation to surplus for any constant $\epsilon > 0$, while relaxing the match quality ({\em i.e.}, maximum distance of any match) by a constant factor. We use an LP rounding framework that easily extends to other objectives such as maximizing volume of trade or profit.

We justify our models by considering a dynamic marketplace setting where agents arrive according to a stochastic process and have finite patience (or deadlines) for being matched. We perform queueing analysis to show that for policies that route agents to facilities and match them, ensuring a low abandonment probability of agents reduces to ensuring sufficient flow arrives at each facility.  Such an analysis also helps us posit facility location variants that capture settings where the platform elicits deadlines truthfully by posting lotteries over different prices and wages for different deadlines.
\end{abstract}

\maketitle

%--------------------------------------------------------%

%--------------------------------------------------------%

%--------------------------------------------------------%

\section{Introduction}
Online marketplaces have transformed the economic landscape of the modern world. Many of today's most important companies are platforms facilitating trade between agents: both for goods (Amazon, eBay, Etsy), and increasingly, services: transportation (Lyft, Uber); physical and virtual work (Taskrabbit, Upwork); lodging (Airbnb); 
%shipping and delivery (Google Express, Amazon Fresh, Shyp)
etc. These platforms enable fine-grained monitoring of participants, and greater control via pricing%~\cite{Sid-Lyft}
, terms of trade, recommendation and directed search%~\cite{Sid-WWW17}
, etc. The challenge of harnessing this increase in data and control has led to a growing literature in online marketplace design.

The basic algorithmic challenge facing a marketplace platform can be summarized as follows: it must decide \emph{which buyer should match to which seller, at what time, and for what price and wage}, in order to maximize some desired objective. In a sense, this combines the challenges of online bipartite matching, job scheduling, and pricing and mechanism design.   
%Our main contribution is a simple model that simultaneously addresses these three orthogonal challenges. 
%Consequently, prior work has typically relaxed one or more aspects of the problem -- for example, by studying pricing for one-sided scheduling problems~\cite{Hajiaghayi05,Chawla17a}, mechanisms for static two-sided markets~\cite{Sid-WWW17,Leonardi2}, or simple matching policies in dynamic two-sided settings~\cite{AkbarpourGharan,Sid-Lyft}.  
In this paper, we build towards a model that simultaneously addresses these three orthogonal challenges and tackles the two-sided marketplace design problem in its full generality. 
%We present a detailed comparison with related literature in these areas in Section~\ref{sec:related}.

\subsection{High-Level Problem Formulation}
Our motivating application is the design of a marketplace platform, and in particular, policies for pricing both buyers and sellers, and scheduling feasible matches between them.  In our setting, buyers and sellers flow into nodes located in a metric space representing an underlying feature space in which agents are embedded. The metric distance between a buyer and seller captures the overall quality of the match; For instance, in AirBnB, it represents some combination of geographic distance from the desired location and quality of room; in UpWork, it can capture how well the skill set of the consultant matches the task requirements; and so on. We assume the platform needs to match buyers and sellers within a distance threshold specified as input.

In addition to their known embedding in the metric space, buyers have value for being served, and  sellers have costs for providing service; the platform knows the distribution of these values.  The platform needs to make two policy decisions: what prices/wages to set at each node, and how to match the resulting flow of buyers and sellers.  We consider a simple class of policies where the platform opens a set of facilities  in the metric space (or {\em canonical features} in the feature space) and routes flow from demand/supply nodes to these facilities. The flow routed to any facility is matched up. In effect, the platform maps agents' features to some close-by canonical features and matches up agents with the same features. We insist these facilities satisfy the following service guarantees:
\begin{enumerate}
\item \textbf{Quality of service guarantees}: The flow assigned to a facility is from supply and demand nodes within distance $R$. Therefore, any matched demand/supply is within distance $2R$.%, which is a uniform upper bound on match quality. 
\item \textbf{Service availability guarantees}: In order for the flow at the facility to match up, each facility needs to have {\em flow balance} of supply and demand routed there. Furthermore, if we view facilities as {\em canonical features/types} that are used to compress the feature space of buyers/sellers, then we want these features to be sufficiently representative. We therefore insist these facilities are sufficiently {\em thick}, meaning there is a {\em lower bound} $\L$ on the flow routed  to each of them. %Furthermore, in reality, buyers and sellers are discrete (rather than being fluid) and have finite patience for being matched. In such a setting, the role of the facility is to {\em thicken the demand and supply} sufficiently so that arriving buyers and sellers find a feasible match with high probability. 
%show via queueing analysis that this is equivalent to enforcing a suitable {\em lower bound} $\L$ on the flow routed to each open market. 
\end{enumerate}

In Section~\ref{sec:justif}, we present a more formal queueing-theoretic justification, showing that the above constraints arise from natural stochastic matching policies for dynamically arriving {\em atomic} agents.

The goal of the platform is to (i) set prices/wages at each node in the metric space; (ii) open a set of facilities; and (iii) route the resulting demand/supply flow to these facilities satisfying the quality and service availability guarantees. The objective of the platform is to maximize the {\em gains from trade} or {\em social surplus}, which is total value of buyers {\em minus} total cost of sellers, subject to {\em weak budget balance}, meaning that the  profit (the difference between total price charged to buyers and total wage paid to sellers) is non-negative. Our same solution idea applies to other objectives, such as maximizing profit, or maximizing the {\em throughput} or volume of trade subject to weak budget balance. This leads to a class of optimization problems that we term {\em Two-sided Facility Location}. 

\subsection{Our Results} 
 Our first main contribution in Sections~\ref{sec:oblivious}--\ref{sec:rounding} is to show an approximation algorithm for {\em Two Sided Facility Location}. We present a new LP rounding framework that for any constant $\epsilon > 0$, achieves a $(1+\epsilon)$ approximation to the social surplus objective (resp. throughput and profit objectives). It relaxes the distance bound constraint by a factor of $4$, while preserving the budget balance constraint, as well as the flow balance and lower bound constraints at each facility.  If we allow a tiny additive error $\Delta$ in the surplus objective, our algorithm requires solving $O\left( \frac{n^{1/\epsilon}}{\epsilon} \log \frac{n W_{\max}}{\epsilon \Delta} \right)$ LPs, where $n$ is the number of nodes, and $W_{\max}$ is the maximum possible surplus.
 
 %Our algorithm requires solving $O\left( \frac{n^{1/\epsilon}}{\epsilon} \log \frac{n W}{\epsilon} \right)$ LPs, where $n$ is the number of nodes, and $W$ is the ratio of the maximum to minimum surplus achievable by the optimum solution, assuming it achieves positive surplus.

We show in Section~\ref{sec:hardness} that the surplus objective is {\sc NP-Hard} to approximate to a factor $o(\L^c)$ for some constant $c > 0$, unless the distance bound is relaxed by at least a factor of $2$. %The hardness arises due to the combination of flow lower bound, flow balance, and the difference measures in the surplus objective and budget balance constraint.  %The technically interesting aspect of our problem is that that  . It makes the straightforward LP formulation have unbounded integrality gap. 

\medskip
\noindent{\bf Techniques.} Our facility location variants mirror the {\em profit earning facility location} problem in~\cite{PEFL}. Just like that setting, we have lower bounds on demand served at each facility and an upper bound on how far the facility can be from an assigned demand. However, there are key differences that preclude the application of existing techniques from lower-balanced facility location~\cite{ZS16,GuhaMM00,KargerM00,PEFL,zoya}: First, the demand or supply at each node is a variable that can be adjusted using pricing. This means the demand/supply can be zero at some ``outlier" nodes, so that they do not need to be served by any facility. Secondly, each facility needs to satisfy flow balance between supply and demand, and finally, both surplus and profit involve {\em differences}, so the platform can potentially lose money at some facilities, but recover it at others. 

The above differences make formulating an LP relaxation tricky. Note that even in~\cite{PEFL}, the version with outliers and profits that can become negative has unbounded integrality gap, because the optimal profit can be zero while the LP achieves positive profit. Unlike~\cite{PEFL}, since we can control demand/supply by pricing, we have greater flexibility in modifying the LP variables.  Despite this, the integrality gap of the straightforward LP formulation for our problem is large, because there could be a facility that generates a bulk of the surplus, but has large negative profit that is compensated by other facilities. (See Appendix~\ref{app:example} for an example). 

This brings up our main technical contributions: We first observe that if we focus on the LP variables corresponding to facility $i$, we can  scale these up or down by changing the fraction to which this node is an outlier. This enables us to use techniques reminiscent of improved greedy algorithms for budgeted coverage problems~\cite{KhullerMN,DeanGV}: In particular, we {\em strengthen the LP formulation} via guessing a few of the facilities that are opened in the optimal solution.  Next, we use the guesses to develop a structural characterization for this stronger LP based on modifying variables for {\em pairs} of facilities.  In effect this shows that there is some integrality in the neighborhood of any partially open facility, which helps us consolidate these facilities while preserving all constraints. %Our techniques are fairly general, allowing us to circumvent integrality gaps, and as we show below, allow us to generalize to a larger class of objectives and problem statements.

\iffalse
\begin{itemize}
\item In Section~\ref{sec:oblivious},  we present an algorithm for  {\em Two Sided Facility Location} that achieves optimal profit, while relaxing the distance bound constraint by a factor of $4$. 
\item In Section~\ref{sec:welfare}, we show the flexibility of our framework by extending the previous model with the goal of maximizing the welfare or social surplus (difference between total value of buyers minus total cost of sellers) subject to profit being non-negative (weak budget balance).  A difficulty with maximizing surplus, however, is that the platform can potentially lose money at some facilities, but recover it at others. This makes the simple LP formulation have unbounded integrality gap.  Despite this difficulty, we present a $2$-approximation.% via a strenghtened LP formulation and a careful structural argument on the LP solution. 
The same solution technique works for other objectives such as throughput (volume of matches made).
\end{itemize}
\fi

\medskip
\noindent {\bf Queueing-theoretic Foundations.} Our second main contribution is to present {\em queueing-theoretic} micro-foundations to our facility location problems. What makes the problems hard and technically interesting is the presence of the lower bound $\L$ on flow routed to each facility; if $\L = 0$, the problem is in fact polynomial time solvable!  In Section~\ref{sec:justif}, we show that the lower bound constraints (along with flow balance) arise naturally from a {\em dynamic marketplace} setting. Here, atomic buyers/sellers arrive according to Poisson processes, and have a private patience for being matched before they abandon the system. In this context, service availability corresponds to joint pricing and matching policies that have low abandonment rate. We perform queueing analysis to show that for policies that route agents to close-by facilities and match them optimally there, ensuring a low abandonment probability  reduces to ensuring  both flow balance and that sufficient flow arrives at each facility. In effect, our {\em static} facility location variants have their roots in {\em dynamic} control policies that maximize surplus while ensuring low abandonment. 

The advantage of our LP rounding framework for {\em Two-sided Facility Location} is that it gracefully handles more complex variants motivated by the dynamic marketplace setting. For instance, consider the dynamic problem where the platform uses prices and wages to truthfully elicit patience of agents, and subsequently matches them optimally using Earliest Deadline First (EDF) scheduling in each facility. Motivated by this, we consider an {\em envy-free} variant of {\em Two-sided Facility Location}  in Section~\ref{sec:elicited}, where each node (or agent type) is composed of sub-types that envy one another. The platform sets prices/wages for each sub-type so that each agent truthfully chooses its sub-type. Each sub-type has a weight and the service availability constraints are captured by flow balance of supply and demand at each facility, and a lower bound on the total {\em weight} routed to each open facility. Our LP rounding framework easily extends  to yield optimal profit while relaxing the distance constraint by factor $4$.

%Our work therefore  provides impetus for developing better understanding of LP-based schemes for capacitated and lower-bounded facility location~\cite{Abrams,AnSS14,ZS16,PEFL}, and we present a set of open questions in Section~\ref{sec:open}.
 
 \subsection{Related Work}
\label{sec:related}

\noindent {\bf Two-sided Markets.} Our objective  maximizes social surplus subject to budget balance (and individual rationality). This is a classic objective in two-sided market mechanisms, and originates in the celebrated work of Myerson and Satterthwaite~\cite{MyersonS}, where it is termed {\em gains of trade}. They considered the case of a single buyer and seller.  %This work showed the impossibility of extracting {\em full} social surplus using truthful and budget-balanced mechanisms even in the bilateral trade ({\em i.e.}, single buyer, single seller) setting. 
This has inspired a recent line of work on truthful mechanisms for approximate surplus maximization in markets of multiple buyers and sellers~\cite{mcafee2008gains,BrustleCWZ17,Blumrosen}, ultimately resulting in a $2$-approximation to gains of trade. This line of work assumes buyers and sellers are matched in one shot. The novelty in our work is in modeling a {\em dynamic setting} and incorporating {\em service availability} guarantees while preserving the same objectives. We therefore consider the more natural class of mechanisms that post prices and wages. Posted price mechanisms have been extensively studied  in two-sided marketplaces~\cite{mcafee2008gains,rochet2003,armstrong2006,weyl2010}, and the main idea we borrow from this literature is the notion of {\em insulating tariffs}~\cite{weyl2010}, which posits that market design is easier if the prices seen by buyers is disconnected from the wages seen by service providers.  

Another recent line of work shows approximately optimal mechanisms for maximizing welfare in two sided markets with {\em goods}~\cite{Leonardi1,Leonardi2}; however, theirs is a sum objective defined in terms of the final sets of items allocated to each buyer and seller, which is different from the gains of trade. % In this setting, social welfare is naturally defined in terms of the final sets of items allocated to each agent (buyers and sellers), which is a {\em sum objective}. A more appropriate and standard measure of welfare for the markets for service settings we consider is the gains from trade~\cite{MyersonS}, {\em i.e.}, the {\em difference} between buyer valuation and seller cost of successful matches. It is precisely these difference objective and constraints that make our problems technically challenging.  %Our focus in this work is more on the design choices of a monopolist platform, for which profit, surplus, and volume of trade are all relevant objectives.

%Our paper considers a sub-class of truthful mechanisms that sets prices and wages. There is indeed a long line of work on pricing in two-sided marketplaces~\cite{rochet2003,armstrong2006,weyl2010}, and the main idea we borrow from this literature is the notion of {\em insulating tariffs}~\cite{weyl2010}, which posits that market design is easier if the prices seen by buyers is disconnected from the wages seen by service providers. 

%We finally note that a very different model for two-sided marketplaces, where prices appear endogenously via market clearing (instead of being set by the platform), is presented in~\cite{Sid-WWW17}. 

Finally, our work is related to two-sided segmentation problems considered in~\cite{Sid-WWW17}. In their model, prices appear endogenously via market clearing (instead of being set by the platform). Our concept of clustering buyer and seller nodes via facilities is a form of market segmentation. However, unlike~\cite{Sid-WWW17}, the flow lower bound prevents us from arbitrarily splitting a market into smaller sub-markets, making our problem technically different.

\medskip
%\noindent {\bf Envy-free Pricing and Scheduling.} 
\noindent {\bf Dynamic Marketplaces.} Our work on dynamic marketplaces is related to several recent works on online scheduling under stochastic arrivals of tasks on machines with limited resources~\cite{Hajiaghayi05,Chawla17a,Chawla17b}. Tasks have (private) types comprising their value, arrival time, and deadline; the platform's goal is to maximize welfare while truthfully eliciting the type. While similar to our work on pricing resources or tasks, they allow agents to choose assignments based on posted prices ({\em envy-freeness}). Another difference is the markets considered in their studies are one-sided. 
%More broadly, it is an interesting question to extend the line of work on profit-maximizing envy-free pricing~\cite{guruswami2005profit,BorgsCCLN14} to two sided markets. 
Blum {\em et al.}~\cite{Blum02} consider online two-sided markets with {\em fixed bids}, and present competitive algorithms for maximizing profit or number of matches. Though their objectives are simpler, their methods will necessarily discard a constant fraction of feasible matches, which can lead to significant user dissatisfaction. In contrast, our focus is on  ensuring high service availability which corresponds to small user abandonment.  

 Dynamic two-sided markets also serve as motivation for recent work on ``online matching with delays"~\cite{EmekKW16,AzarCK17}. Here, buyers and sellers arrive online in a metric space, and can be matched at any time subsequently. The goal is to minimize the total distance cost plus waiting cost, and the authors present a log-competitive algorithm. These models do not incorporate pricing. Further, our dynamic marketplace models are more closely related to dynamic matchings with {\em stochastic} arrivals, and we review this literature in Section~\ref{sec:justif}.

%\medskip
%In summary, prior work on dynamic markets has typically relaxed one or more aspects of the problem --  by studying pricing for one-sided scheduling problems~\cite{Hajiaghayi05,Chawla17a}, mechanisms for static two-sided markets~\cite{Sid-WWW17,Leonardi2}, or simple matching policies in dynamic two-sided settings~\cite{AkbarpourGharan,Sid-Lyft}. Our work takes a significant step in tackling the two-sided marketplace design problem in its full generality.

\iffalse
\medskip\noindent {\bf Online Matching with Delays.} Two-sided markets also serve as motivation for recent work on online matching with delays~\cite{EmekKW16,AzarCK17}. Here, buyers and sellers arrive online in a metric space, and can be matched at any time subsequently. The goal is to minimize the total distance cost plus waiting cost, and the authors present a log-competitive algorithm via tree embeddings (cf~\cite{Debmalya} for a related model). In contrast, our dynamic marketplace models correspond to stochastic arrivals, and we seek approximately-optimal joint pricing and matching policies.  We contrast our models with literature on dynamic matchings in Section~\ref{sec:justif}.
\fi 
 
%\medskip 
 %Finally, our work is also related to market segmentation problems considered in~\cite{Sid-WWW17}. Our concept of clustering buyer and seller nodes via facilitys is a form of market segmentation. However, unlike~\cite{Sid-WWW17}, the abandonment rate constraint prevents us from arbitrarily splitting a market into smaller sub-markets, hence making our problem technically very different.

\section{Problem Statement}
\label{sec:oblivious}
\label{sec:welfare}
There is a metric space $G(V,E)$ with an associated distance function $c$. We assume buyers and sellers are fluid and arrive at nodes in this metric space and must be matched with each other. The metric distance captures the match quality -- if  demand at node $j$ is matched to supply at node $j'$, the disutility to the system is captured by $c(j,j')$. %The platform designer aims to ensure a quality of service guarantee specified as a radius $R$: any feasible match must thus be between demand and supply whose metric distance is at most $2R$.

%For many of the platforms we care about, this utility captures how well the buyer's requirements match the services provided by the seller. For instance, in AirBnB, it will be some combination of geographic distance from the desired location and quality of room; in UpWork, it will be how well the skill set of the consultant match the task requirements; and so on. In other words, the metric space can be interpreted as the feature space in which the sellers and buyers are embedded. 
%The platform designer aims to ensure a quality of service guarantee specified as a radius $R$: any feasible match must thus be between a buyer and seller whose metric distance is at most $R$.\footnote{Our results extend easily to the case where the metric is bipartite with buyers and sellers on opposite sides.}

\medskip \noindent {\bf Demand and Supply Functions.} Each node $j \in V$ is associated with a demand function $F_j$ and a supply function $H_j$. When offered price $p$, we assume the demand ({\em i.e.}, buyers)  at node $j$ %follows a {\tt Poisson} process with rate  
is $d_j F_j(p)$, where $F_j(p)$ is a non-increasing function of $p$ corresponding to the survival function of a continuous density function $f_j$ on valuations; formally $F_j(p) = \int_{v=p}^{\infty} f_j(v) dv$.
In other words, the volume of buyers is $d_j$, and when quoted a price $p$, only buyers with valuations at least $p$ choose to participate, and hence the stream of buyers is thinned by a factor $F_j(p)$. We assume there is a finite price $p_{\max}$ so that $F_j(p_{\max}) = 0$ for all $j \in V$.

Similarly, when offered wage $w$, the supply of sellers arriving at node $j$ is  $s_j H_j(w)$, where $H_j(w)$ is a non-decreasing function of $w$, corresponding to the CDF of a continuous density function $h_j$ on costs; formally $ H_j(w) = \int_{c=0}^{w} h_j(c) dc$. When offered wage $w$, all sellers with cost at most $w$ participate, resulting in supply $s_j H_j(w)$.  We assume that $H_j(0) = 0$, {\em i.e.}, sellers accrue $0$ utility by not participating in the platform. 

In this section, we make the standard regularity assumptions ({\em \`{a} la} Myerson-Satterthwaite~\cite{MyersonS}) on the density functions $f_j$ and $h_j$. In particular, we assume $x F_j^{-1}(x)$ is concave in $x$ and $y H_j^{-1}(y)$ is convex in $y$. This is true for instance, for all log-concave densities $f_j$ and $h_j$, which includes Normal, Exponential, and Uniform distributions. In Section~\ref{sec:elicitation}, we discuss how the continuity and regularity assumptions can be removed by using lottery pricing.

%In this section, we assume that . This is a {\em supply} version of regularity and holds, among other distributions, for all log-concave densities $h_j$.Finally, 

%\medskip \noindent {\bf Incentive-compatibility.} We assume agents cannot misreport their location in the metric space, since this is based on publicly available information ({\em i.e.}, the platform knows or can compute user features to infer location in the resulting feature space). We assume a buyer/seller cannot report an earlier arrival time; to prevent them reporting later arrival times, we assume  the platform sets prices and wages at each node $j$ that are independent of time. We consider more nuanced models of incentive compatibility in Section~\ref{sec:elicitation}, where we consider randomized envy-free pricing schemes.

\medskip\noindent {\bf Facilities.} As discussed before, %we assume the  In the above model, we have assumed buyers and sellers are {\em fluid}. In reality, they are discrete and have finite patience for being matched. As discussed before, the goal of the platform is to ensure arriving buyers and sellers find a feasible match with high probability. We term this {\em service availability}. 
 the platform opens a set of facilities or ``canonical features" in the metric space and routes demand and supply fractionally to open facilities within distance bound $R$ in order to match them up.  (Therefore, the distance of any match is at most $2R$.) %These markets can be viewed  as ``canonical features" in the feature space.    
%We assume buyers and sellers routed to the same virtual market are matched, so that the distance of any match is at most $2R$. In Appendix~\ref{app:queueing1}, we perform queueing analysis to show that high service availability reduces to the following two intuitive constraints on the flow of supply and demand arriving each open virtual market:
It ensures high service availability by ensuring the following two intuitive constraints on the flow of supply and demand arriving each open facility. We justify these constraints via queueing arguments in Section~\ref{sec:justif}.

\begin{description}
\item[Flow Balance.] Since the flow arriving at the facility are matched, the total amount of supply and demand are equal.
\item[Flow Lower Bound.] The facility is sufficiently {\em thick}, that is, the total amount of supply (resp. demand) is at least $\L$.
\end{description}

%The above conditions are intuitive -- we need the same volume of buyers and sellers at each facility so that they can be feasibly matched. The lower bound condition is justified by considering a dynamic arrival model, which we present later. 

%\medskip
%The goal of the platform is to (1) set a price (resp. wage) per demand (resp. supply) node; and (2) open a set of virtual facilitys and route demand/supply to facilitys within distance bound $R$ so as to maximize total  profit ({\em i.e.}, the total price extracted from buyers minus the total wage paid to sellers). We show in Section~\ref{sec:thruput} how to adapt our techniques for maximizing the volume of matches (throughput) subject to profit being non-negative (weak budget balance).

\newcommand{\TSFL}{{\sc Two-Sided Fac-Loc}}

\subsection{Two-sided Facility Location}
We now have all the ingredients to make our problem, \TSFL$(\L,R)$, precise. %We have a metric space $G(V,E)$ on $n$ nodes, endowed with distance function $c(\cdot,\cdot)$, and with supply and demand functions ($H_j(w)$ and $F_j(p)$ respectively) specified at each node $j \in V$. We assume these functions are regular as defined above. 
Let $\F \subseteq V$  the set of all candidate facilities; we set $\F = V$.  For each node $j$, $B_R(j) \subseteq \F$ denotes the set of all the  facilities $i \in \F$ such that $c(i,j) \le R$. Similarly, for each facility $i$, we define $B_R(i)$ as the set of all the nodes $j \in V$ such that $c(i,j) \le R$.  A solution to \TSFL$(\L,R)$ is specified by the following: 
%A solution to {\sc Patience-Oblivious}$(\epsilon,R)$ is specified by the following:
\begin{itemize}
\item An assignment of price $p_j$ and wage $w_j$ to each node $j \in V$. If the price (resp. wage) at node $j$ is $p_{\max}$ (resp. $0$), we assume this node generates no demand (resp. supply).
\item A set of locations $S \subseteq \F$ for opening the facilities; and
\item A routing scheme $\vec{x^d_j}$ (resp. $\vec{x^s_j}$) for each demand (resp. supply) node $j \in V$ that generates non-zero demand (resp. supply). For $i \in S$, if $x^d_{ij} > 0$ then $i \in B_R(j)$. Further, $\sum_{i \in B_R(j)} x^d_{ij} = 1$ for all nodes $j \in V$ that generate non-zero demand; similarly,  $\sum_{i \in B_R(j)} x^s_{ij} = 1$ for each $j \in V$ with non-zero supply.  
\end{itemize}

Note that the flow of demand (resp. supply) from node $j \in V$ to facility $i \in S$ is $d_j F_j(p_j) x^d_{ij}$ (resp. $s_j H_j(w_j) x^s_{ij}$).  We enforce that the flows satisfy the {\em flow balance} and {\em flow lower bound} conditions at each $i \in S$. 

\medskip
\noindent {\bf Weak Budget Balance.} The next constraint is {\em weak budget balance}, which corresponds to profit being non-negative. This is written as:
$$ \mbox{Profit} = \sum_{j \in V}  \left(  d_j p_j F_{j}(p_j)   -  s_j w_j H_{j}(w_j)  \right) \ge 0$$

\medskip
\noindent {\bf Surplus (Gains from Trade) Objective.} We first define the following quantities:
$$
\V_j(p) = d_j \int_{v=p}^{\infty} v f_j(v) dv \qquad \mbox{and} \qquad \C_j(w) = s_j \int_{c=0}^{w} c h_j(c) dc
$$
respectively denote the total value of buyers generated by node $j$ when the price there is $p$ and the cost of sellers at node $j$ when the wage there is $w$. The surplus objective can then be written as:
$$ \mbox{Social Surplus = }  \sum_{j \in V}  \left(  \V_{j}(p_j)   -  \C_{j}(w_j)  \right)$$
This defines the problem \TSFL$(\L,R)$

\medskip
Though we focus on surplus in the paper, the same techniques extend to other objectives such as maximizing {\em throughput} or volume of matches, defined as the total demand (or supply): $\sum_{j \in V} d_j F_j(p_j) $. It also extends to the objective of maximizing profit defined above.

\newcommand{\E}{\mathbf{E}}
%\subsection{Approximation Algorithm}

\subsection{Hardness of Approximation}
\label{sec:hardness}
We characterize the approximation ratio of any algorithm for \TSFL$(\L,R)$ as $(\alpha,\gamma)$, if the resulting solution relaxes the distance bound of an assignment to a facility to $\alpha R$, ensures lower bound $\L$, and has surplus $OPT/\gamma$, where $OPT$ is the optimal surplus.  

\begin{theorem}
It is {\sc NP-Hard} to find a $(\alpha, \gamma)$ approximation for \TSFL$(\L,R)$ unless $\alpha \ge 2$ or $\gamma \ge \L^{c}$ for some constant $c > 0$. 
\end{theorem}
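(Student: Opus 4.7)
The plan is to reduce from a gap-hardness of a covering-style problem, using the flow lower bound $\L$ to amplify the gap into the polynomial factor $\L^c$. A natural starting point is Set Cover (with its $(1-o(1))\ln n$ Feige hardness) or, for stronger polynomial gaps, Label Cover / Min-Rep obtained from projection games with parallel repetition, which are NP-hard within factor $n^{\Omega(1)}$. Given an instance $(U, \F)$ of such a covering problem, I would build a two-scale metric with distances $R$ and $2R+1$, and populate it with (i) a buyer node per element $e \in U$ having unit demand and unit value, and (ii) a seller/facility-candidate node per set $S \in \F$ having supply $\L$ and zero cost. Set $d(e,S)=R$ iff $e \in S$, and $2R+1$ otherwise; between two elements or two sets, take the shortest-path extension (which is at most $2R$), and verify that the triangle inequality is preserved.

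For completeness, if the source instance admits a small cover $\F' \subseteq \F$ with each chosen set of size at least $\L$, open facilities at the nodes for $\F'$ and route each element's unit demand to its covering set at distance $R$, balanced by that facility's own supply; this gives surplus $\Theta(|U|)$, satisfies flow balance and the lower bound $\L$, and respects distance $R$ (so $\alpha=1$). For soundness against an $(\alpha,\gamma)$ approximation with $\alpha < 2$: any facility opened at a seller node $S$ has $\alpha R < 2R+1$, so the only buyers within its relaxed ball are exactly the elements of $S$. Hence the set of matched elements is covered by the chosen sets, and the source's soundness guarantee caps this at $|U|/\L^c$, yielding a gap of $\L^c$ in surplus. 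The weak-budget-balance constraint is automatic because the cost is zero and each match collects nonnegative revenue.

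The main obstacle is the amplification to $\L^c$ rather than just $\log \L$ or a constant. A single-shot Set Cover reduction gives only the logarithmic gap, and to promote it to polynomial $\L^c$ one either (a) reduces directly from Moshkovitz--Raz projection-game Label Cover (whose $n^{-\Omega(1)}$ soundness gap transfers through the above gadget once $\L$ is taken to grow with $n$), or (b) applies a product/tensoring construction on the covering instance where elements become $t$-tuples, sets become Cartesian-product covers, and $\L$ scales as the product of the original set sizes, so that a $t$-fold tensor of a constant gap becomes $\L^{c}$ for an appropriate $c = c(t)$. The subsidiary technical care lies in keeping the construction a genuine metric (handled by shortest-path closure, which only shrinks distances $S$--$S'$ or $e$--$e'$ and therefore does not break the $\alpha R < 2R+1$ separation used in soundness) and in matching flow-balance exactly on each facility, which is why we equip each set-node with supply precisely $\L$ and pad sets to size $\L$ with private dummy buyers placed at distance $R$ only to their own set.
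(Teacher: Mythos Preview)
Your gadget is structurally close to the paper's, but the soundness argument does not go through, because you are reducing from the wrong source problem. Set Cover (or Label Cover) hardness bounds the \emph{number of sets} needed to cover everything; it says nothing about the \emph{number of elements} that any chosen collection of sets can cover. Your claim ``the source's soundness guarantee caps this at $|U|/\L^{c}$'' is simply false for Set Cover, and a gap-Max-$k$-Coverage style promise only gives a constant $(1-1/e)$ gap, not a polynomial one. Neither of your amplification routes (a) or (b) fixes this: Label Cover soundness is about satisfied projection constraints, not coverable elements, and there is no concrete gadget in your proposal translating a projection game into the \TSFL\ instance; tensoring a Max-Coverage instance does not amplify the gap beyond a constant.

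The missing observation is that the flow lower bound forces each open facility to absorb \emph{all} of the demand in its radius-$R$ ball, so two candidate facilities whose element-sets intersect cannot both be opened. The right source problem is therefore \emph{Maximum Independent Set}, not covering. The paper instantiates this directly: take a $k$-regular graph, put a supply node at each vertex with $s_j=k$, a unit-demand node at the midpoint of each edge (so it is at distance $R$ from both endpoints and distance $>R$ from every other vertex), and set $\L=k$. A facility at vertex $v$ must pull all $k$ incident edge-demands, so adjacent vertices cannot both be open; the set of open facilities is exactly an independent set, and the surplus is proportional to its size. Since MIS in $k$-regular graphs is {\sc NP-Hard} to approximate within $k^{c}$ for some constant $c>0$, and here $\L=k$, the $\L^{c}$ inapproximability follows immediately without any tensoring or Label Cover machinery. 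Your construction becomes correct if you replace the generic set system by this vertex/edge-midpoint system and invoke bounded-degree Independent Set hardness instead of covering hardness.
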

\begin{proof}
We reduce from Maximum Independent Set in $k$-regular graphs ($k$-MIS). Given a $k$-MIS instance with $n$ vertices and $m = kn/2$ edges, construct a metric space where each edge in the $k$-regular graph $G(V,E)$ has length $2R$. Place a demand node at the mid-point of each edge, and a supply node at each vertex. We set $\L = k$. Each supply node has $s_j = k$, and supply function $H^{-1}(r) = 1 - \delta$ for $r \in [0,1]$.  Similarly, each demand node has $d_j = 1$, and demand function $F^{-1}(q) = 1$ for $q \in [0,1]$. Since the distance threshold is $R$, the facilities are opened at vertices of the graph. Each such facility must see $k$ units of supply and demand, which means all neighboring demand is routed there, leading to surplus (resp. profit) $k \delta$ at that facility. Since two open facilities cannot share a demand, this means the open facilities form an independent set. Therefore, the surplus of \TSFL$(k,R)$ is $\delta$ times the size of the maximum independent set in $G$. This is {\sc NP-hard} to approximate to within a factor of $k^{c}$ for some constant $c > 0$; see~\cite{Alon,Elad}. Therefore, we need to relax the distance bound by  factor of $2$.
\end{proof}

 In the sequel, we present a $(4,1+\epsilon)$ approximation.   For the algorithm to have polynomial running time, they also need lose a small additive amount in the objective; as we show later, this quantity can be exponentially small.

\section{Linear Programming Relaxation}
\label{sec:basic}
%We assume $D_{jk}(p)$ is the rate at which buyers with deadlines in $[\nu_{k-1}, \nu_{k}]$ arrive when the price for deadline $\nu_k$ is $p$. Similarly, let $S_{jk}(w)$ denote the rate at which sellers with deadlines in $[\nu_{k-1}, \nu_k]$ arrive when the wage for deadline $\nu_k$ is $w$. Note that since the arrival process is Poisson, the process remains Poisson even after we filter the buyers and sellers using a randomized price or wage ladder.

%The rest of the model is the same as before: Our goal is to find a distribution over price and wage ladders at each node $j$ and a set of virtual markets and a randomized assignment of agents to virtual markets based on their elicited deadlines, so that revenue is maximized subject to abandonment probability in each virtual market being at most $\epsilon$.
We now formulate  \TSFL$(\L,R)$ as an integer linear program.  
%We subsequently show how to round the linear programming relaxation so that the profit and abandonment probability, $\epsilon$, are preserved, but the assignment distance between a node and virtual market is relaxed from $R$ to $4R$. We finally show how to solve the LP relaxation in polynomial time, and that it is {\sc NP-Hard} to solve the problem in polynomial time if the distance bound is not relaxed.
For ease of exposition, we compare against an optimal solution that is restricted to using prices from a fixed set $\P$ and wages from a fixed set $\W$.  Our solution is not restricted to using prices and wages from this set. In Appendix~\ref{sec:compute}, we show that our LP admits to a polynomial time solution of arbitrary additive accuracy when this assumption is relaxed, and demand/supply distributions are continuous.

Note that we assume $p_{\max} \in \P$ and $0 \in \W$, and at this price (resp. wage) the demand (resp. supply) is identically zero. This is the price (resp. wage) where this node becomes an outlier and the solution is not required to open a facility nearby.

Instead of writing our LP using prices and wages, we use the associated demand/supply values. Let  $ \Q_j = \{q \ | q =  F_j(p),  p \in \P \}$ and  $\R_j = \{r \ | \ r = H_j(w),  w \in \W \} $. The case where the node is an outlier now corresponds to setting $q = 0$ (resp. $r = 0$). 

We redefine the valuations and costs using supply/demand values as follows:
\begin{equation}
\label{eq:valuation} 
\V_j(q) = d_j \int_{v=F_j^{-1}(q)}^{\infty} v f_j(v) dv \qquad \mbox{and} \qquad \C_j(r) = s_j \int_{c=0}^{H_j^{-1}(r)} c h_j(c) dc
\end{equation}
respectively denote the total value of buyers generated by node $j$ when the price there is $F_j^{-1}(q)$, and  the cost of sellers at node $j$ when the wage there is $H_j^{-1}(r)$. 

\medskip
\noindent{\bf Variables.}  For each candidate facility $i \in \F$, let $y_i \in \{0,1\}$ be the indicator variable that a facility is opened at that location in the metric space. Let $\alpha_{jq} = 1$ if the price at node $j \in V$ corresponds to  $q \in \Q_j$. Similarly define $\beta_{jr}$ for $r \in \R_j$.  The variable $z_{ijq}$ is non-zero only if $\alpha_{jq} = 1$ and $i \in B_R(j)$. In this case, it is the fraction of $j$'s demand that is routed to $i$. We define $z_{ijr}$ similarly for supply. Note that the actual flow from $j$ to $i$ is  $d_j q z_{ijq}$; similarly for sellers. 

\medskip
\noindent{\bf Objective and Weak Budget Balance.} The objective of social surplus  and the profit being non-negative can be captured by:
\begin{eqnarray}
%\label{eq:surplusObjective}
\label{eq:rev0}
\label{eq:WBB}
\mbox{Surplus Objective:} & \max \sum_{j \in V} \left( \sum_{q \in \Q_j} \alpha_{jq}  \V_j(q) - \sum_{r \in \R_j} \beta_{jr} \C_j(r)\right) &\\
 \mbox{Weak Budget Balance:} &  \sum_{j \in V}  \left( \sum_{q \in \Q_j} \alpha_{jq}  d_j q F^{-1}_{j}(q)   - \sum _{r \in \R_j} \beta_{jr} s_j r H^{-1}_{j}(r)  \right) &\ge  0
\end{eqnarray} 

%\medskip
\noindent{\bf Feasibility.} The following constraints connect the variables together. We present these constraints only for buyers (that is, $q \in \Q_j$); the constraints for sellers is obtained by replacing $q$  with $r  \in \R_j$. First, for each $q \in \Q_j$, we need to choose one price for buyers (resp. sellers). 
\begin{eqnarray}
\label{eq:single}
\sum_{q \in \Q_j} \alpha_{jq}  & = &  1 \quad \forall j \in V \\
\label{eq:singleprime}
\sum_{i \in B_R(j)}  z_{ijq}  & = &  \alpha_{jq} \quad \forall j \in V, q \in \Q_j 
\end{eqnarray}

Next, if demand is fractionally routed from $j$ to $i$, then $i$ should be open and within distance $R$. Note that we need to ignore the case where $q = 0$ (resp. $r = 0$) since in this case, the demand (resp. supply) routed is zero, so that there is no need for a nearby facility.
\begin{eqnarray}
\label{eq:open0}
\sum_{q \in \Q_j, q > 0} z_{ijq}  & \le &  y_i\quad \forall j \in V, i\in B_R(j)
\end{eqnarray} 

%\medskip
\noindent {\bf Service Availability.} We finally encode {\em flow balance} and {\em flow lower bound} at each facility: 
\begin{eqnarray}
\label{eq:flowbalance0}
\sum_{j \in B_R(i)} d_j  \sum_{q \in \Q_j} q z_{ijq} & = &  \sum_{j \in B_R(i)} s_j \sum_{ r \in \R_j}  r z_{ijr}  \quad \forall i \in \F \\
\label{eq:lowerbound0}
\sum_{j \in B_R(i)} d_j   \sum_{q \in \Q_j}  q z_{ijq}  & \ge & \L y_i \quad \qquad \qquad \qquad \forall i \in \F
\end{eqnarray}

If we replace the integrality constraints on $\{y_i \}$ and the $\{\alpha_{jq}, \beta_{jr} \}$ with $y_i, \alpha_{jq}, \beta_{jr} \in [0,1]$, the above is a linear programming relaxation of the problem.   %Note that the integer program always has a feasible solution with non-negative surplus: We can set $\alpha_{j0} = \beta_{j0} = 1$ for all $j$; $y_i = 0$ for all $i$, and $z_{ijq} = 0$ for all $i, j \in B_R(i)$, and $q > 0$. This satisfies all the constraints, and yields objective value $0$. In other words, if we make nodes outliers, we don't need to open any virtual markets.  %We will now show how to round the optimal solution to the LP relaxation.

\subsection{Integrality Gap and Stronger LP Formulation}
The main technical hurdle arises because of the flow lower bound constraint: The LP optimum (and even an integer optimum) can now open facilities $i$ which have positive surplus but negative profit, and compensate for the loss in profit by other facilitiess with positive profit.  Note that Constraints (\ref{eq:open0}) and~(\ref{eq:single}) together imply:
$$\sum_{i \in B_R(j)} y_i  \ge   1 - \alpha_{j0} \ \ \  \forall \mbox{ Demand nodes } j $$
and similarly for supply nodes. We call the quantities $\alpha_{j0}$ (resp. $\beta_{j0}$) the {\em outlier fraction} of node $j$, and correspond to the case where the node is priced in such a way that it does not generate flow.  In this case, there is no need to open a facility to satisfy $j$.  Therefore, if $\alpha_{j0} \beta_{j0} > 0$, then the above constraints could imply $\sum_{i \in B_R(j)} y_i < 1$. This means there could only be a small fractional facility open in the vicinity of $j$, which can account for a lot of the surplus. This makes the LP have super-polynomial integrality gap and we present an example in Appendix~\ref{app:example}.

\medskip
We therefore add constraints to the above LP formulation so that has bounded integrality gap.  Before showing how to strengthen the LP, we present the following easy claim, which implies that once we round $\{y_i\}$, the remaining solution can easily be made integral.

\begin{lemma}
\label{lem:increase}
Given any feasible LP solution, there is an equivalent solution that assigns only one price (resp. wage) per demand (resp. supply) node, that preserves all constraints and does  not decrease the objective.
\end{lemma}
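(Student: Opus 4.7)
I would approach this by consolidating each node's fractional pricing distribution to a single ``mean'' value and re-routing proportionally. For each demand node $j$, set $\bar q_j := \sum_{q \in \Q_j} \alpha_{jq} q$ and take the new solution $\alpha'_{j,\bar q_j} = 1$ (zero elsewhere), and analogously $\bar r_j := \sum_{r \in \R_j} \beta_{jr} r$ with $\beta'_{j,\bar r_j}=1$. Keep $y' = y$. Redefine the routing by $z'_{ij, \bar q_j} := (\sum_q q\, z_{ijq})/\bar q_j$ (with $0/0:=0$), and analogously on the supply side. This choice is designed so that the per-facility flow $d_j \bar q_j z'_{ij,\bar q_j}$ equals the old per-facility flow $d_j \sum_q q z_{ijq}$, which makes the facility-level flow balance~(\ref{eq:flowbalance0}) and the flow lower bound~(\ref{eq:lowerbound0}) carry over automatically. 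Constraint~(\ref{eq:single}) is trivial, and~(\ref{eq:singleprime}) follows from $\sum_i z'_{ij,\bar q_j} = (\sum_q q\,\alpha_{jq})/\bar q_j = 1$.

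For the surplus objective and weak budget balance I would apply Jensen's inequality four times, using the facts already available in the paper. Differentiating the definitions in~(\ref{eq:valuation}) yields $\V_j'(q) = d_j F_j^{-1}(q)$, which is non-increasing in $q$, so $\V_j$ is concave; likewise $\C_j'(r) = s_j H_j^{-1}(r)$ is non-decreasing, so $\C_j$ is convex. The standing regularity assumption on $f_j, h_j$ supplies concavity of $q F_j^{-1}(q)$ and convexity of $r H_j^{-1}(r)$. Hence $\V_j(\bar q_j) \ge \sum_q \alpha_{jq}\V_j(q)$ and $\C_j(\bar r_j) \le \sum_r \beta_{jr}\C_j(r)$, so the surplus~(\ref{eq:rev0}) does not decrease; and $\bar q_j F_j^{-1}(\bar q_j) \ge \sum_q \alpha_{jq}\, q F_j^{-1}(q)$ while $\bar r_j H_j^{-1}(\bar r_j) \le \sum_r \beta_{jr}\, r H_j^{-1}(r)$, so the revenue side of~(\ref{eq:WBB}) weakly increases and the wage side weakly decreases, preserving weak budget balance.

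The step I expect to be delicate is verifying the facility-opening constraint~(\ref{eq:open0}): the naive bound gives only $z'_{ij,\bar q_j} \le y_i/\bar q_j$, which can exceed $y_i$ when $\bar q_j < 1$. As the paragraph preceding the statement makes explicit, the lemma is to be invoked after the $y_i$ have been rounded to $\{0,1\}$, and in that regime the verification goes through cleanly. When $y_i = 1$ we have $z'_{ij,\bar q_j} \le \sum_{i'} z'_{i'j,\bar q_j} = 1 = y_i$ trivially. When $y_i = 0$, the original constraint~(\ref{eq:open0}) forces $z_{ijq} = 0$ for every $q > 0$, so $\sum_q q\, z_{ijq} = 0$ and hence $z'_{ij,\bar q_j} = 0 = y_i$. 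The symmetric argument on the supply side finishes the check, and together with the Jensen step above yields the claimed equivalent single-price/wage solution.
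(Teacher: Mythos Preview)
Your proof is correct and follows essentially the same construction as the paper: average the $q$ (resp.\ $r$) values, rescale the routing so per-facility flows are preserved, and use concavity/convexity of $\V_j,\C_j$ and of $qF_j^{-1}(q), rH_j^{-1}(r)$ via Jensen. You are in fact more careful than the paper on one point: the paper simply asserts that ``this process preserves the demand and supply from node $j$ to facility $i$, which preserves all the constraints,'' whereas you correctly observe that~(\ref{eq:open0}) can fail for fractional $y_i$ and verify it explicitly under the integral-$y$ hypothesis that the paper only states in the sentence preceding the lemma.
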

\begin{proof}
The rounding of $\alpha_{jq},\beta_{jr}$ is simple. Let  $ \hat{q}_j = \sum_{q \in \Q_j}  q \alpha_{jq}$ and $\hat{r}_j = \sum_{r \in \R_j}  r \beta_{jr}$. Set the price of location $j$ to be $F_j^{-1}(\hat{q}_j)$ and the wage at $j$ to be $H_j^{-1}(\hat{r}_j)$. In other words, set $ \hat{\alpha}_{j\hat{q}_j} \leftarrow 1$ and $\hat{\beta}_{j\hat{r}_j} \leftarrow 1$. Further set $ \hat{z}_{ij\hat{q}_j} \leftarrow \sum_{q \in \Q'_j} z_{ijq} \frac{q}{\hat{q}_j}$ and $\hat{z}_{ij\hat{r}_j} \leftarrow \sum_{r \in \R'_j} z_{ijr} \frac{r}{\hat{r}_j}$.

Note that this process preserves the demand and supply from node $j$ to  facility $i$, which preserves all the constraints in the LP formulation. Note further that the function $q F_j^{-1}(q)$ is concave in $q$ by the regularity of the demand function. Therefore,
$$ \sum_q \alpha_{jq} q F_j^{-1}(q) \le \left(\sum_q \alpha_{jq} q\right) F_j^{-1}\left(\sum_q \alpha_{jq} q\right) = \hat{q}_j F_j^{-1}(\hat{q}_j)$$
Similarly, since we assumed $r H_j^{-1}(r)$ is convex in $r$ (by regularity of supply), we have $ \sum_r \beta_{jr} r H_j^{-1}(r) \ge  \hat{r}_j H_j^{-1}(\hat{r}_j)$. Therefore,  this transformation preserves weak budget balance. Next, we note that $\V_j(q)$ is always a concave function of $q$ and $\C_j(r)$ is always a convex function. Therefore, the above argument also implies the social surplus does not decrease in the above transformation.
\end{proof}

Define a variable for the surplus $W_i$ and profit $R_i$ of facility $i$ respectively as:
\begin{eqnarray}
\label{eq:welfare}
W_i & = & \sum_{j \in B_R(i)}  \left( \sum_{q \in \Q'_j} \V_j(q) z_{ijq} -  \sum_{r \in \R'_j} \C_j(r) z_{ijr} \right) \\
\label{eq:revone} 
R_i & = & \sum_{j \in B_R(i)}  \left( \sum_{q \in \Q_j'} d_j q F^{-1}_{j}(q)  z_{ijq}  - \sum _{r \in \R_j'} s_j r H^{-1}_{j}(r) z_{ijr}   \right)
\end{eqnarray}
Then the objective can be rewritten as: Maximize $\sum_i W_i$, and weak budget balance is $\sum_i R_i \ge 0$. Further note that $W_i \ge R_i$ since  for any $q,r$, we have $\V_j(q) \ge d_j q F_j^{-1}(q)$, and $\C_j(r) \le s_j r H_j^{-1}(r)$ if we integrate the expressions in Equation~(\ref{eq:valuation})  by parts.

\medskip
\noindent{\bf Stronger LP Formulation.} Let $\epsilon > 0$ be any constant, and let $\theta = \frac{1}{\epsilon}$. We guess the $\theta$ facilities in the optimum solution that have the most surplus. There are two cases. First, if the optimum solution opens fewer than $\theta$ facilities, we can perform a brute force search over all  integer solutions that open at most $\theta$ facilities. This can be done in $O(n^{\theta})$ time, where $n = |V|$. For each selection of facilities, Lemma~\ref{lem:increase} implies that solving the LP formulation with the corresponding $y_i$ set to $1$ and the rest to zero yields the optimal surplus (or results in declaring infeasibility). We can therefore find the surplus maximizing solution among these in polynomial time; call this surplus $W_1$.

In the other case, the optimum solution opens more than $\theta$ facilities. In this case, for every choice of parameter $\W \ge 0$ scaled to powers of $(1+\epsilon)$, and every subset $S \subseteq \F$ with $|S| = \theta$, define $LP(\W,S)$ as having all of Constraints (\ref{eq:WBB}) -- (\ref{eq:lowerbound0}), {\em plus} the following new ones:

\begin{eqnarray}
\label{eq:tighter1}
W_i & \le & \W y_i  \qquad \forall i \in \F \setminus S \\
\label{eq:tighter2}
 y_i  & = &  1 \qquad  \qquad \forall i \in S \\
 \label{eq:tighter3}
 \sum_{i \in S} W_{i} & \ge & \W \theta (1-\epsilon) 
 \end{eqnarray}

Let $OPT$ denote the optimum surplus, and let $W_2 = \max\{LP(\W,S) \ | \W \ge 0,\ S \mbox{ s.t. } |S| = \theta\} $. Then, it is easy to see that $ OPT \le \max \left(W_1, W_2 \right)$: If $OPT$ opens fewer than $\theta$ facilities, then clearly $W_1 \ge OPT$, since $W_1$ opens all possible choices of at most $\theta$ facilities.  Otherwise, let $W^*_i$ denote the surplus generated by open facility $i$ in $OPT$. Let $W^*$ denote the $\theta^{th}$ largest value of $W^*_i$. Choose  $\W \in [W^*,W^*(1+\epsilon)]$, and $S$ as the set of $\theta$ facilities in $OPT$ with $W^*_i \ge W^*$.  This induces a feasible solution to the above constraints, so that the LP optimum is at least $OPT$.

%\begin{lemma}
%	The solution of integer program has a revenue equal to the optimal solution in which the drop probability is at most $c \epsilon$ (?)
%\end{lemma}

\subsection{Structural Characterization of LP Optimum}
\label{app:struct}
We now present a structural characterization about the LP optimum. This  is crucial for the rounding that we present subsequently, since it allows sufficient mass of facility to be located in roughly the same neighborhood.  %The key challenges are two-fold: First, any solution can set the wage/price in such a way that some nodes generate no flow, so they don't need to be assigned to any open facility. Secondly, the difference objectives and flow balance imply we  cannot arbitrarily reassign or scale LP variables.  
%We present the high-level rounding ideas, and relegate details to Appendix~\ref{app:oblivious}.

Recall $\alpha_{j0}, \beta_{j0}$ are the fractions to which node $j$ is an outlier, {\em i.e.} has zero flow. These variables are the reason the simpler LP had large integrality gap, since they allow facilities in $B_R(j)$ to be open to small fractions. Our main observation is the following: %{\em Some node in the neighborhood of any partially open facility cannot (integrally) be an outlier.} 
\begin{lemma} [Structural Characterization]
\label{lem:struct0} 
There is a $(1+\epsilon)$ approximation to the objective of  $LP(\W,S)$ that satisfies:
$$ \forall i \in \F, \qquad y_i \in (0,1) \qquad \Rightarrow \qquad \exists j \in B_R(i) \mbox{  s.t.  } \alpha_{j0} \beta_{j0} = 0$$
\end{lemma}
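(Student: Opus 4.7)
My plan is to start from an optimal solution of $LP(\W,S)$ and iteratively perturb it, using a pairing argument on ``bad'' fractional facilities, to obtain a solution satisfying the stated property while losing at most a factor $(1+\epsilon)$ in the objective. Call facility $i \notin S$ \emph{bad} if $y_i \in (0,1)$ and every $j \in B_R(i)$ has $\alpha_{j0}\beta_{j0} > 0$, and let $T$ denote the set of bad facilities; note that $i \in S$ has $y_i=1$ by Constraint~(\ref{eq:tighter2}) and hence is never bad.

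The key primitive is a local scaling operation at a single bad facility $i$: uniformly scale $y_i$ and every $z_{ijq}, z_{ijr}$ by a factor $(1+\delta)$, with compensating updates $\alpha_{jq} \gets \alpha_{jq} + \delta z_{ijq}$ for $q>0$ and $\alpha_{j0} \gets \alpha_{j0} - \delta\sum_{q>0} z_{ijq}$, and analogously for $\beta$. Because $i$ is bad, $\alpha_{j0}$ and $\beta_{j0}$ are strictly positive for every $j \in B_R(i)$, giving slack for $\delta$ of either sign. All constraints intrinsic to $i$---flow balance~(\ref{eq:flowbalance0}), the lower bound~(\ref{eq:lowerbound0}), the open-facility constraint~(\ref{eq:open0}), and $W_i \le \W y_i$---are preserved because both sides scale by $(1+\delta)$, and Constraints~(\ref{eq:single})--(\ref{eq:singleprime}) hold after the $\alpha,\beta$ updates. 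The only cross-facility effect is on weak budget balance: total surplus changes by $\delta W_i$ and total profit by $\delta R_i$.

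Given this primitive, whenever $|T|\ge 2$ I pick distinct $i, i' \in T$ and consider the joint perturbation $(\delta_i, \delta_{i'})$ subject to $\delta_i R_i + \delta_{i'} R_{i'} = 0$, which exactly preserves weak budget balance. This leaves a one-parameter family; I pick its sign so that the objective change $\delta_i W_i + \delta_{i'} W_{i'}$ is nonnegative, then push its magnitude until some constraint becomes tight. Either $y_i$ or $y_{i'}$ reaches $\{0,1\}$ (the facility becomes integral and leaves $T$), or some $\alpha_{j0}$ or $\beta_{j0}$ with $j \in B_R(i) \cup B_R(i')$ hits $0$ (the corresponding facility acquires a non-outlier neighbor and leaves $T$). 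Either way $|T|$ strictly decreases, so after finitely many iterations $|T|\le 1$.

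For the final bad facility $i^*$ (if any), Constraint~(\ref{eq:tighter1}) gives $W_{i^*} \le \W y_{i^*} \le \W$, while Constraint~(\ref{eq:tighter3}) gives $\sum_{i \in S} W_i \ge \W\theta(1-\epsilon) = \W(1-\epsilon)/\epsilon$, so $W_{i^*}$ is at most an $\epsilon/(1-\epsilon)$ fraction of the total surplus. I then close $i^*$ by setting $y_{i^*}=0$ and returning its mass to the outlier fractions $\alpha_{j0},\beta_{j0}$; if this violates weak budget balance (only possible when $R_{i^*}>0$), I restore it by a small rescaling of $z$-variables at some facility in $S$, exploiting the strict slack in its lower-bound constraint above $\L$. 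The main technical obstacle I anticipate is this last cleanup step: verifying that the compensating perturbation at $S$ is feasible and that both the direct loss from closing $i^*$ and the compensation cost are absorbed by the $\epsilon$ slack granted by the guessing over $S$.
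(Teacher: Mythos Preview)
Your overall scheme matches the paper's: uniformly scale the LP variables at a bad fractional facility, pair two bad facilities to drive $|T|$ down to at most one while staying LP-feasible and not losing objective, then close the survivor and charge its surplus against the guessed set $S$ via Constraints~(\ref{eq:tighter1}) and~(\ref{eq:tighter3}). The only cosmetic difference is the pairing direction: you hold profit fixed ($\delta_iR_i+\delta_{i'}R_{i'}=0$) and let surplus weakly increase, whereas the paper holds surplus fixed and lets profit weakly increase. Either works.

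The genuine gap is your final cleanup. Your proposed fix for the case $R_{i^*}>0$---rescaling $z$-variables at some $i\in S$ ``exploiting the strict slack in its lower-bound constraint above $\L$''---is not well-defined: nothing in the LP guarantees any facility in $S$ has flow strictly above $\L$, and even if it did, it is unclear how such a rescaling would raise total profit without breaking flow balance or Constraint~(\ref{eq:tighter3}). The paper sidesteps this entirely by observing up front that \emph{no bad facility can have $R_i\ge 0$}. After discarding any open $i$ with $W_i\le 0$ (harmless, since then $R_i\le W_i\le 0$), a bad $i$ with $R_i\ge 0$ and $W_i>0$ could be scaled up \emph{by itself}: this strictly increases the objective while preserving every constraint---including~(\ref{eq:tighter1}), whose two sides scale identically---contradicting LP optimality. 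Hence every bad facility has $R_i<0$ from the outset; your uniform scalings preserve this sign, so the surviving $i^*$ necessarily has $R_{i^*}<0$. Closing it then only helps weak budget balance, and your worrisome cleanup step evaporates.
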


\medskip
\noindent {\bf High level Idea.} Before presenting the proof, we present the high level idea. Consider a facility that violates the statement. If it has $R_i > 0$, then consider all LP variables $\{y_i, z_{ijq}, z_{ijr}\}$ corresponding to some such facility $i$ and uniformly increase them. This increases both profit and surplus. We can decrease the fractions $\{\alpha_{j0}, \beta_{j0} \}$ to which any node $j$ connected to $i$ is assigned as outlier to compensate the fraction to which it is assigned $i$.  Note that Constraints~(\ref{eq:flowbalance0}) and (\ref{eq:lowerbound0}) are local to a single facility. Since we scale up {\em all} variables corresponding to a facility, we preserve these constraints. If we keep up this process, then either the facility  is completely open ($y_i = 1$); or some demand/supply node assigned to it has $\alpha_{j0} = 0$ or $\beta_{j0} = 0$. (This must hold in the LP optimum.)

On the other hand, if $W_i > 0$ but $R_i < 0$,  then increasing  its LP variables would hurt profit, which may violate the budget balance constraint; while reducing the variables would increase profit but hurt the surplus. The idea now is the following: Take any pair of such facilities; increase the variables for one facility while decrease them for the other. There is always a way of doing this so that {\em both} the total profit and surplus do not decrease -- this is essentially a fractional knapsack argument. Again, since we uniformly scale all variables corresponding to a facility, we preserve all constraints. Note that the process can also stop when a facility closes ($y_i = 0$). 
Eventually, we run out of pairs, so that for all but one facility, the above characterization holds.

At this point, the strengthened LP kicks in. The singleton facility violating the above lemma was fractionally open and had $R_i < 0$. It has surplus at most $\W$ by Constraint~(\ref{eq:tighter1}). But we have  integrally open facilities that generate surplus at least $\W\frac{(1-\epsilon)}{\epsilon}$ by Constraint~(\ref{eq:tighter3}), which means closing the singleton facility reduces surplus by at most $(1-\epsilon)$, and preserves budget balance.

\medskip 
\noindent {\bf Proof of Lemma~\ref{lem:struct0}.}
We first simplify the LP. Let $\Q_j' = \Q_j \setminus \{0\}$ and $\R_j' = \R_j \setminus \{0\}$. Let  $\eta_j = \sum_{q \in \Q_j'} \alpha_{jq}$ and  $\phi_j = \sum_{r \in \R_j'} \beta_{jr}$ respectively denote the fractions to which $j$ is assigned prices (wages) that correspond to non-zero demand (supply). We can rewrite the constraints (\ref{eq:single}) and (\ref{eq:singleprime}) as:
\begin{equation}
%\label{eq:single2} 
\label{eq:single3} 
\eta_j   =   \sum_{q \in \Q_j'} \sum_{i \in B_R(j)}  z_{ijq}  \le   1 \qquad \mbox{and} \qquad \phi_j  =   \sum_{r \in \R_j'} \sum_{i \in B_R(j)}  z_{ijr}   \le  1 \qquad  \forall j \in V 
% \qquad \forall j \in V 
\end{equation}
and set $\alpha_{j0} = 1 - \eta_j$, and $\beta_{j0} = 1 - \phi_j$.  Recall from Equations~(\ref{eq:welfare}) and~(\ref{eq:revone}) that $W_i$ and $R_i$ are respectively the surplus and profit of facility $i$ in the LP optimum. %The LP optimum is $\sum_{i \in \F} W_i$, and the weak budget balance constraint implies $\sum_{i \in \F} R_i \ge 0$.

\medskip
We call node $j$ {\em fully demand-utilized} if $\eta_j = 1$, and {\em fully supply-utilized} if  $\phi_j = 1$. We say that node $j$ is {\em partially demand-connected} to facility $i \in \F$ if $\sum_{q \in \Q_j'} z_{ijq} > 0$, and {\em partially supply-connected} if $\sum_{r \in \R_j'} z_{ijr} > 0$. Let $J_D(i)$ denote the set of nodes that are partially demand-connected to $i \in \F$, and $J_S(i)$ be the set that is partially supply-connected.

\begin{lemma}
\label{lem:struct2}
\label{lem:struct}
In the LP optimum, for any $i \in \F$ with $y_i > 0$, we have $W_i > 0$. Furthermore, all except one facility with $y_i > 0$ satisfy the following condition:  either $y_i = 1$; or there exists $j \in J_D(i)$, such that $j$ is {\em fully demand-utilized}; or there exists $j \in J_S(i)$ such that $j$ is {\em fully supply-utilized}.
\end{lemma}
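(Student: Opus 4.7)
The plan is to prove both parts by exchange arguments on an LP optimum. For the first part, suppose some open facility has $W_i\le 0$. I would close it: set $y_i\gets 0$ and zero out every $z_{ijq}, z_{ijr}$ incident to $i$, absorbing the freed mass into the outlier fractions $\alpha_{j0},\beta_{j0}$ of the affected nodes. Constraints~(\ref{eq:flowbalance0}) and~(\ref{eq:lowerbound0}) at $i$ become vacuous, and~(\ref{eq:open0}),~(\ref{eq:single3}) are preserved everywhere else. The surplus changes by $-W_i\ge 0$ and the profit by $-R_i\ge -W_i\ge 0$, so~(\ref{eq:WBB}) is preserved and the objective does not decrease. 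Hence we may assume $W_i>0$ for every open facility in the LP optimum.

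For the second part, call an open facility \emph{bad} if $y_i<1$, no $j\in J_D(i)$ has $\eta_j=1$, and no $j\in J_S(i)$ has $\phi_j=1$. Partition bad facilities by profit sign: Type~A with $R_i\ge 0$, Type~B with $R_i<0$. To eliminate a Type~A bad facility $i$, multiply every LP variable local to $i$ (namely $y_i$ and all $z_{ijq}, z_{ijr}$) by a common factor $(1+\delta)$, and reduce $\alpha_{j0},\beta_{j0}$ at the affected nodes to keep~(\ref{eq:single3}) tight. Proportional scaling preserves~(\ref{eq:open0}),~(\ref{eq:flowbalance0}),~(\ref{eq:lowerbound0}) at $i$ verbatim, and since $i$ is bad every connected node has strict slack in~(\ref{eq:single3}), so some $\delta>0$ is feasible. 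Push $\delta$ up until $y_i$ hits $1$ or some connected node saturates. The surplus strictly increases by $\delta W_i$ and the profit changes by $\delta R_i\ge 0$, so LP optimality forces no Type~A bad facility to remain.

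The main obstacle is the Type~B case, because scaling a single Type~B facility up strictly hurts profit while scaling it down strictly hurts surplus, so neither direction is safe alone. I would pair two Type~B bad facilities $i_1,i_2$, relabel so that $W_{i_1}/|R_{i_1}|\ge W_{i_2}/|R_{i_2}|$, and scale $i_1$'s variables by $(1+\delta_1)$ while scaling $i_2$'s variables by $(1-\delta_2)$. The induced change in total profit is $\delta_2|R_{i_2}|-\delta_1|R_{i_1}|$ and in total surplus is $\delta_1 W_{i_1}-\delta_2 W_{i_2}$, so any ratio $\delta_1/\delta_2$ in the nonempty interval
\[
\bigl[\,W_{i_2}/W_{i_1},\;|R_{i_2}|/|R_{i_1}|\,\bigr]
\]
keeps both non-negative. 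Push $\delta_1,\delta_2$ simultaneously until one of four events fires: $y_{i_1}=1$, a neighbor of $i_1$ saturates~(\ref{eq:single3}), $y_{i_2}=0$, or a neighbor of $i_2$ saturates. In each case, at least one of $i_1,i_2$ leaves the bad set, so the number of Type~B bad facilities drops by at least one per round. Iterating leaves at most one Type~B bad facility, which combined with the elimination of Type~A violators yields the ``all except one'' conclusion.

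Two small checks I would verify to finish the argument: feasibility of the paired operation (the up-scaling on $i_1$ is feasible precisely because $i_1$ is bad, while down-scaling $i_2$ only decreases the $\eta_j,\phi_j$ values, absorbed trivially into $\alpha_{j0},\beta_{j0}$); and preservation of the Type~B label during the iteration, since each facility's $R_i$ is rescaled by the same positive factor $(1\pm\delta)$ and so keeps its sign until $y_i$ reaches $0$ (at which point the facility is closed and drops out of consideration altogether).
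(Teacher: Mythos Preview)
Your proposal is correct and follows essentially the same approach as the paper: close facilities with $W_i\le 0$; use LP optimality to rule out bad facilities with $R_i\ge 0$ via uniform up-scaling; and reduce the count of bad facilities with $R_i<0$ via a paired up/down scaling that keeps both surplus and profit from decreasing. The only cosmetic differences are that the paper fixes the ratio $\delta_1/\delta_2=W_{i_2}/W_{i_1}$ (the left endpoint of your interval, holding surplus exactly constant) rather than allowing any ratio in $[W_{i_2}/W_{i_1},\,|R_{i_2}|/|R_{i_1}|]$, and that your fourth stopping event (``a neighbor of $i_2$ saturates'') cannot fire under pure down-scaling of $i_2$ and so is superfluous.
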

\begin{proof}
First, note that $W_i \ge R_i$. %an elementary calculation shows that social surplus is at least the profit of a facility, {\em i.e.}, $W_i \ge R_i$ for all $i$. To see this, note that for any $q,r$, we have $\V_j(q) \ge d_j q F_j^{-1}(q)$, and $\C_j(r) \le s_j r H_j^{-1}(r)$ if we integrate the expressions in Equation~(\ref{eq:valuation})  by parts. 
Suppose an open facility has $W_i \le 0$. This implies $R_i \le 0$. Consider a different solution that sets $y_i = 0$ and $z_{ijq} = z_{ijr} = 0$ for all $j \in V, q \in \Q_j', r \in \R_j'$. We adjust $\eta_j$ and $\phi_j$ for each $j \in V$ to preserve constraint (\ref{eq:single3}). This new solution has $W_i = R_i = 0$ and has at least as large surplus and profit. Since we set all LP variables corresponding to $i$ to zero, this satisfies constraints (\ref{eq:open0}), (\ref{eq:flowbalance0}), and (\ref{eq:lowerbound0}), and is therefore feasible for the LP.

\medskip
We therefore only focus on facilities whose $W_i > 0$. Consider the set of these facilities and split them into two groups. Let 
$$S_1 = \{ i \in \F \ | y_i \in (0,1) \mbox{ and } R_i < 0\} \qquad \mbox{and} \qquad S_2 = \{ i \in \F \ | y_i \in (0,1) \mbox{ and } R_i \ge 0\}$$
Assume that for all of these facilities, there is no $j \in J_D(i)$, such that $j$ is {\em fully demand-utilized} and no $j \in J_S(i)$ such that $j$ is {\em fully supply-utilized}

First consider the facilities in set $S_2$, we can increase the LP variables till the condition of the lemma is satisfied; this process only increases both profit and surplus, preserving all constraints. We do this as follows: Suppose no $j  \in J_D(i)$  is fully demand-utilized and no $j \in J_S(i)$ is fully supply-utilized. In this case, let 
$$ \theta = \min\left( \frac{1}{y_i}, \min_{j \in J_D(i)} \left( \frac{1 -  \sum_{q \in \Q_j'} \sum_{i' \neq i}  z_{i'jq}}{\sum_{q \in \Q_j'}  z_{ijq} }\right), \min_{j \in J_S(i)} \left( \frac{1 -  \sum_{r \in \R_j'} \sum_{i' \neq i}  z_{i'jr}}{\sum_{r \in \R_j'}  z_{ijr} } \right) \right)$$  
Since $\eta_j < 1$ for all $j \in J_D(i)$ and $\phi_j < 1$ for all $j \in J_S(i)$, we have $\theta > 1$. Suppose we increase $y_i$, $z_{ijq}$ for all $j \in J_D(i), q \in \Q_j'$, and $z_{ijr}$ for all $j \in J_S(i)$, $r \in \R_j'$ by a factor of $\theta$. We will still have $\eta_j \le 1$ for all $j \in J_D(i)$ and $\phi_j \le 1$ for all $j \in J_S(i)$. However, either $y_i$ or one of these values will become exactly $1$. Note that since we scaled all LP variables corresponding to $i$ by the same factor, this preserves constraints (\ref{eq:open0}), (\ref{eq:flowbalance0}), and (\ref{eq:lowerbound0}). The surplus and profit of this facility increase by a factor $\theta > 1$, which contradicts the optimality of the LP solution. Therefore, the facilities in $S_2$ all have a neighboring $j$ that is either fully demand-utilized or fully supply-utilized.

Next consider the facilities in set $S_1$. Suppose the condition in the lemma is not satisfied, so that there are two facilities $i$ and $i'$ with $y_i, y_{i'}  \in (0,1)$, and with no neighboring $j$ that is either fully demand-utilized or fully supply-utilized. Suppose $W_i/|R_i| = a$ and $W_{i'}/|R_{i'}|= b$ with $a \ge b$. We multiply each LP variable corresponding to $i$ by a factor of $(1+\delta)$, and multiply each LP variable corresponding to $i'$ by a factor of $\left(1 - \frac{W_{i}}{W_{i'}}  \delta\right)$. Using the same argument as above, this process preserves the constraints that are specific to a facility, since all variables are changed by the same factor.  The increase in surplus of facility $i$ is $\delta W_i$, and the decrease in surplus of facility $i'$ is $\delta W_i$, so the overall surplus is preserved. The decrease in profit of facility $i$ is $|R_i| \delta$, and the increase in profit of facility $i'$ is $|R_{i'}| \frac{W_{i}}{W_{i'}}  \delta \ge |R_i| \delta$ by our assumption that $a \ge b$. Therefore, this process cannot decrease profit, hence all constraints are preserved. We choose $\delta$ as the smallest value that either makes facility $i$ have $y_i = 1$ or one neighboring $j$ either fully supply or demand utilized, or that sets the variables of facility $i'$ to zero. In all cases, the size of set $S_1$ reduces by one. We repeat this process till there is only one facility in $S_1$, completing the proof.
\end{proof}

The following corollary now restates Lemma~\ref{lem:struct0}, completing its proof.
 
\begin{corollary}
\label{cor1}
There is a $(1+\epsilon)$-approximation to the LP optimum where any facility with $y_i > 0$ satisfies the following condition:  either $y_i = 1$; or there exists $j \in J_D(i)$, such that $j$ is {\em fully demand-utilized}; or there exists $j \in J_S(i)$ such that $j$ is {\em fully supply-utilized}.
\end{corollary}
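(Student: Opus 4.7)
}
The plan is to start from Lemma~\ref{lem:struct2}, which already guarantees the structural condition for every open facility \emph{except at most one}, and then show that this last ``bad'' facility can be closed at an $(1+\epsilon)$ multiplicative loss in the objective. Combined, this yields the corollary (and hence proves Lemma~\ref{lem:struct0}).

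First I would pin down the nature of the single exceptional facility. Re-reading the proof of Lemma~\ref{lem:struct2}, the $S_2$ facilities (those with $R_i \ge 0$ and $y_i \in (0,1)$ and no fully-utilized neighbor) are each individually scaled up until the structural condition holds; none are left over. Only the pairing step on $S_1$ (facilities with $R_i < 0$ and $y_i \in (0,1)$) can terminate with a leftover facility. So if any exceptional facility $i^\star$ remains, we know $y_{i^\star} \in (0,1)$, $W_{i^\star} > 0$, $R_{i^\star} < 0$, and no neighbor in $J_D(i^\star) \cup J_S(i^\star)$ is fully utilized. Moreover $i^\star \notin S$ since Constraint~(\ref{eq:tighter2}) forces $y_i = 1$ on the guessed set.

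Next I would close $i^\star$ and check feasibility. Concretely, set $y_{i^\star} \leftarrow 0$ and $z_{i^\star j q} \leftarrow 0$, $z_{i^\star j r} \leftarrow 0$ for every $j \in B_R(i^\star)$ and every $q \in \Q_j'$, $r \in \R_j'$. For each affected node $j$, update $\eta_j$ and $\phi_j$ via the identities in~(\ref{eq:single3}), which is equivalent to increasing $\alpha_{j0}$ and $\beta_{j0}$ so that~(\ref{eq:single}) and~(\ref{eq:singleprime}) still hold. Constraints local to $i^\star$, namely~(\ref{eq:open0}), (\ref{eq:flowbalance0}), and~(\ref{eq:lowerbound0}), become trivially satisfied at $i^\star$ (both sides are zero), and no other facility's local constraints are touched because we modified only variables tied to $i^\star$. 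The tighter constraints~(\ref{eq:tighter1})--(\ref{eq:tighter3}) are preserved since $i^\star \notin S$. Finally, weak budget balance~(\ref{eq:WBB}) improves by $|R_{i^\star}|$ because we removed a facility with strictly negative profit.

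The main (and only nontrivial) step is bounding the surplus loss. Closing $i^\star$ decreases the objective by exactly $W_{i^\star}$, and Constraint~(\ref{eq:tighter1}) gives $W_{i^\star} \le \W y_{i^\star} \le \W$. On the other hand, Constraint~(\ref{eq:tighter3}) guarantees that the retained facilities in $S$ alone contribute surplus at least $\W\theta(1-\epsilon) = \W(1-\epsilon)/\epsilon$, so the LP optimum $OPT_{LP}$ satisfies $OPT_{LP} \ge \W(1-\epsilon)/\epsilon$. Therefore the new objective is at least
\[
OPT_{LP} - W_{i^\star} \;\ge\; OPT_{LP} - \W \;\ge\; OPT_{LP}\left(1 - \tfrac{\epsilon}{1-\epsilon}\right),
\]
which for $\epsilon$ replaced by a suitably smaller constant (say $\epsilon/4$) yields a $(1+\epsilon)$-approximation. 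The expected obstacle is purely bookkeeping: making sure that the $\alpha_{jq}, \beta_{jr}$ adjustments at affected nodes remain consistent with~(\ref{eq:single}) and that the objective accounting really isolates $W_{i^\star}$; everything else follows from the locality of the modified variables and the fact that the guessed set $S$ already certifies a large ``safe'' slice of the LP value.
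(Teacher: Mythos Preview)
Your proposal is correct and follows essentially the same approach as the paper: invoke Lemma~\ref{lem:struct2} to isolate at most one exceptional facility with $R_{i^\star}<0$ and $y_{i^\star}\in(0,1)$, close it, use Constraint~(\ref{eq:tighter1}) to bound the lost surplus by $\W$, and use Constraint~(\ref{eq:tighter3}) to argue this is at most an $O(\epsilon)$ fraction of the remaining objective. Your treatment is in fact a bit more careful than the paper's, which simply asserts the loss factor is $(1-\epsilon)$ without the $\epsilon/(1-\epsilon)$ bookkeeping you flag.
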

\begin{proof}
By Constraint~(\ref{eq:tighter2}), there is a set of facilities $S$ that are fully open ({\em i.e.}, $y_{i} = 1$) and $\sum_{i \in S} W_{i} \ge \W \frac{1-\epsilon}{\epsilon}$ by Constraint~(\ref{eq:tighter3}). The rounding in Lemma~\ref{lem:struct2} does not touch these facilities, since we only increase/decrease variables corresponding to partially open facilities ({\em i.e.}, those with $y_i \in (0,1)$). Lemma~\ref{lem:struct2} implies there is only facility $i$ that violates the condition of the corollary. This facility must have surplus $W_i \le \W y_i \le \W$ by Constraint~(\ref{eq:tighter1}).  This means closing this facility (setting all its associated variables to zero) reduces the LP optimum by at most a factor of $(1-\epsilon)$. Since the previous lemma implies this facility had $R_i  < 0$, this means closing it only increases profit, preserving weak budget balance.
\end{proof}

\section{Rounding the LP Relaxation}
\label{sec:rounding}
The rounding now follows approaches similar to those in~\cite{ZS16,williamson2011}. We first present the high-level idea. Note that if a node $j$ has $\alpha_{j0} = 0$ or $\beta_{j0} = 0$, then Constraint~(\ref{eq:open0}) implies $\sum_{i \in B_R(j)} y_i \ge 1$. Consider an {\em independent set} of such nodes, such that no two are fractionally assigned to the same facility. For any $j$ in this  set, move all partially open facilities in $B_R(j)$ to $j$ itself, so that there is a facility integrally opened at $j$. Since we move an entire facility, we preserve all flows, so that flow balance and lower bound are preserved, and so is profit. Now a demand/supply can be assigned a distance $2R$ away, and the opened facilities are integral.

At this point, consider any fractionally open facility $i$. It must have a node $j$ adjacent to it that satisfies the condition in Lemma~\ref{lem:struct0}. If $j$ has a facility completely open at its location, then move $i$ to location $j$. Otherwise, $j$ was not part of the independent set in the previous step, which means $j$ and $j'$ shared a fractionally open facility, and the previous step opened a facility completely at $j'$. In this case, we move $i$ to $j'$, again preserving all flows. This means any demand/supply moves distance at most $4R$, preserving all the LP constraints. 

\subsection{Rounding Facilities}
We now present the rounding algorithm in detail. Initially, all facilities  $i \in \F$ with $y_i > 0$ are {\em partially open}. Node $j \in V$ is {\em untouched} if for all $i$ such that $j \in J_D(i) \cup J_S(i)$, the facility $i$ is partially open.  Let $U$ be the set of {\em untouched} nodes, and let $Z$ be the set that is either fully demand-utilized or fully supply-utilized.  Let $U_f = U \cap Z$. 

\medskip
\noindent {\bf Phase 1.} Consider any $j \in U_f$. W.l.o.g., assume $\eta_j = 1$; the case where $\phi_j = 1$ is symmetric.  Let  $ N(j) = \{i | j \in J_D(i) \}$.  For every $i \in N(j)$, we ``move" $i$ to location $j$; call the new facility at location $j$ as $i^*$. This means we set
%I said new because otherwise we need an extre neq i* and also the definition of touched won't be the same with reassigned in the previous step exactly 
\begin{itemize}
\item $\bar{y}_{i^*} \leftarrow \sum_{i \in N(j)} y_i$ and $\bar{y}_i \leftarrow 0 \qquad \forall i \in N(j)$;
\item $\bar{z}_{i^*j'q} \leftarrow \sum_{i \in N(j)} z_{ij'q}$  and $\bar{z}_{ij'q} \leftarrow 0 \qquad \forall j' \in V, q \in \Q_{j'}, i \in N(j)$
\item $\bar{z}_{i^*j'r} \leftarrow \sum_{i \in N(j)} z_{ij'r}$  and $\bar{z}_{ij'r} \leftarrow 0 \qquad \forall j' \in V,  r \in \R_{j'}, i \in N(j)$
\end{itemize}

From constraint (\ref{eq:open0}), and the fact that $\eta_j = 1$, we have:  
$$\bar{y}_{i^*}  =  \sum_{i  \in N(j)}  y_i \ge \sum_{i \in B_R(j)} \sum_{q \in \Q_j'} z_{ijq} = \eta_j = 1$$ 
Subsequently, we mark every $i \in N(j)$  as {\em closed}, and mark $i^*$ as {\em completely open}. Furthermore, we mark every $j'$ that was reassigned in the above steps as {\em touched}.

Note that in the last three steps, any agents at a node $j'$ that was initially assigned to $i \in N(j)$ is now assigned to $i^*$. Since  each of the distances $j' \rightarrow i$ and $i \rightarrow j$ is at most $R$, the distance from $j'$ to $i^*$ is at most $2R$. Therefore, this step relaxes the distance of a feasible assignment to a facility from $R$ to $2R$.

This process trivially preserves the objective and weak budget balance, as well as constraints (\ref{eq:single3}). Moreover, constraints  (\ref{eq:open0}) and (\ref{eq:flowbalance0}) are satisfied since we add both sides of the constraints corresponding to $i \in N(j)$ to obtain the constraint for $i^*$. Finally, to see that (\ref{eq:lowerbound0}) is satisfied for $i^*$, note that 
$$ \sum_{j' \in J_D(i^*)}   \sum_{q \in \Q_{j'}'}  d_{j'} q \bar{z}_{i^*j'q}   =  \sum_{i \in N(j)} \sum_{j' \in J_D(i)} \sum_{q \in \Q_{j'}'} d_{j'}  q z_{ij'q}  \ge  \L \sum_{i \in N(j)} y_i  \ge   \L
$$
    
We continue this process, finding a node $j \in U_f$, and merging all facilities in $N(j)$ to one location. At the end of this process, the set $U_f$ is empty. 

\medskip
\noindent {\bf Phase 2.} At the end of Phase 1, each node $j$ which is {\em touched } (including all {\em fully utilized} nodes) route some fraction of their demand (or supply) to at least one  facility that is completely open. However, there could still be partially open facilities with $y_i \in (0,1)$ to which demand and supply are assigned. Consider these partially open facilities in arbitrary order. Suppose we are considering facility $i$ and there exists {\em touched} and {\em fully utilized} node $j$ such that $\sum_{q \in \Q_j'} z_{ijq} > 0$ (resp. $\sum_{r \in \R_j'} z_{ijr} > 0$).  Consider the completely open facility $i^*$ such that $\sum_{q \in \Q'_j} \bar{z}_{i^*jq} > 0$ or $\sum_{r \in \R'_j} \bar{z}_{i^*jr} > 0$. We move the facility $i$ to location $i^*$, updating the variables just as in Phase 1; {\em i.e.}, we set
\begin{itemize}
\item $\bar{y}_{i^*} \leftarrow \bar{y}_{i^*} +  y_i$ and $\bar{y}_i \leftarrow 0$;
\item $\bar{z}_{i^*j'q} \leftarrow \bar{z}_{i^*j'q} + z_{ij'q}$  and $\bar{z}_{ij'q} \leftarrow 0 \qquad \forall j' \in V, q \in \Q_{j'}$
\item $\bar{z}_{i^*j'r} \leftarrow \bar{z}_{i^*j'r} + z_{ij'r}$  and $\bar{z}_{ij'r} \leftarrow 0 \qquad \forall j' \in V,  r \in \R_{j'}$
\end{itemize}

The argument that all constraints are preserved follows just as before. For any $j'$ that was partially assigned to $i$, the new assignment is to $i^*$. This distance is at most 
$$c(j',i^*) \le c(j',i) + c(i,j) + c(j,i^*) \le R + R + 2R = 4R$$
 where we note that the distance $j \rightarrow i^*$ was at most $2R$ because $j$ was potentially reassigned to $i^*$ in Phase 1.  We mark all nodes $j'$ that are reassigned in this process as {\em touched}.

\medskip
At the end of this process, suppose there are still partially open facilities with $y_i \in (0,1)$. By Corollary~\ref{cor1}, each of these facilities $i$ must have some $j \in Z$ partially assigned to it. At the end of Phase 1, we have the invariant that $j \notin U_f$, since $U_f$ is empty. This means $j$ was touched on Phase 1. But in that case, $i$ must have been reassigned in Phase 2, which is a contradiction.  Therefore, at this point, all  facilities are either closed ($\bar{y}_i = 0$) or completely open ($\bar{y}_i \ge 1$). Furthermore, for any variable $\bar{z}_{ijq} > 0$ (resp. $\bar{z}_{ijr} > 0$), the facility $i$ is completely open; the distance from $j$ to $i$ is at most $4R$; for each completely open facility, the rate of supply equals the rate of demand (Constraint (\ref{eq:flowbalance0})), and finally, the total flow is at least $\L$ (Constraint (\ref{eq:lowerbound0})).  

\subsection {Final Steps and Running Time} 
At this point, the facilities are opened integrally. Lemma~\ref{lem:increase} now implies that we can choose one price/wage per node preserving all constraints and the objective.  This completes the proof of the following theorem.

\begin{theorem}
\label{thm:oblivious}
For any constant $\epsilon > 0$, there is a  $(4,1+\epsilon)$ approximation algorithm for \TSFL$(\L,R)$. 
\end{theorem}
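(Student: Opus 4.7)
The plan is to combine the stronger LP formulation from Section~\ref{sec:basic} with the structural characterization of Lemma~\ref{lem:struct0} and a two-phase facility consolidation scheme. First I would enumerate candidate parameters: over all subsets $S \subseteq \F$ of size $\theta = 1/\epsilon$ and all scales $\W$ taken to powers of $(1+\epsilon)$, solve $LP(\W,S)$, and separately brute-force all integer solutions opening at most $\theta$ facilities. By the case analysis preceding Lemma~\ref{lem:struct0}, the best of these has objective value at least $OPT$. Then apply Lemma~\ref{lem:struct0} (Corollary~\ref{cor1}) to obtain a fractional solution losing only a factor $(1+\epsilon)$, in which every partially open facility $i$ has some neighbor $j \in B_R(i)$ that is either fully demand-utilized ($\eta_j = 1$) or fully supply-utilized ($\phi_j = 1$).

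Next I would perform Phase~1 of the rounding. I call a node \emph{untouched} if it is fractionally served only by partially open facilities, and iterate over untouched fully-utilized nodes $j$. For each such $j$, collapse the entire neighborhood $N(j)$ of partially open facilities onto location $j$ by summing all associated LP variables ($y$'s and $z$'s) into a single consolidated facility $i^*$. Since the variables for each facility $i \in N(j)$ are scaled uniformly (in fact preserved by addition), the per-facility constraints (flow balance, lower bound, and $z_{ijq} \le y_i$) add up correctly, and since $\eta_j = 1$ (or $\phi_j = 1$) forces $\bar y_{i^*} \ge 1$, the new facility is integrally open. Every node previously routing to some $i \in N(j)$ now routes to $j$, so the distance from any such node to $i^*$ is at most $2R$ by the triangle inequality.

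For Phase~2, I would handle the partially open facilities $i$ that survive Phase~1. By Corollary~\ref{cor1}, each such $i$ must have a fully-utilized neighbor $j$, and (since $U_f$ is empty after Phase~1) $j$ was touched during Phase~1, meaning $j$ already routes to some completely open facility $i^*$ at distance at most $2R$. Move $i$ into $i^*$ by the same variable-addition step. Every node $j'$ reassigned in this step now has distance $c(j', i^*) \le c(j', i) + c(i, j) + c(j, i^*) \le R + R + 2R = 4R$, which is where the relaxation factor of $4$ comes from. All LP constraints local to a facility are preserved by the additive merge, and profit and surplus are preserved exactly. After Phase~2, every surviving facility is integrally open, so I can invoke Lemma~\ref{lem:increase} to collapse the $\alpha$ and $\beta$ variables to single prices and wages per node without losing any surplus or violating weak budget balance.

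The main obstacle I expect is verifying that the Phase~2 invariant is truly maintained: one must rule out the possibility that a partially open facility survives with no fully-utilized neighbor touched in Phase~1. This is exactly where the structural characterization pays off, and where the $(1+\epsilon)$ loss enters --- the single pathological facility permitted by Lemma~\ref{lem:struct2} (with $W_i \le \W$ and $R_i < 0$) is absorbed by closing it, which costs at most an $\epsilon$ fraction of the guaranteed surplus $\W\theta(1-\epsilon)$ from the guessed set $S$ in Constraint~(\ref{eq:tighter3}) and can only improve profit. Putting the approximation factor $(1+\epsilon)$ together with the distance blowup of $4$ yields the $(4, 1+\epsilon)$ guarantee claimed.
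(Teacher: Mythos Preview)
Your proposal is correct and follows essentially the same approach as the paper: enumerate $LP(\W,S)$ over guessed parameters (plus the brute-force case), apply the structural characterization of Lemma~\ref{lem:struct0}/Corollary~\ref{cor1} at a $(1+\epsilon)$ loss, run the two-phase consolidation to make all $y_i$ integral while blowing up distances to at most $4R$, and finish with Lemma~\ref{lem:increase}. The key steps, the distance accounting $R+R+2R=4R$, and the handling of the single bad facility all match the paper's argument.
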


Note that the surplus  can become arbitrarily close to zero. Therefore, for parameter $\Delta > 0$, we will allow additive error $\Delta$ in the surplus objective. Note that the maximum possible surplus is $W_{\max} = (\sum_j d_j) p_{\max}$, which is an upper bound on $\W$. If we assume the surplus is at least $\Delta$, then $\max_i W^*_i \ge \Delta/n$. Since the top $1/\epsilon$ facilities on $OPT$ have surplus $\W (1-\epsilon)/\epsilon$, this means we can set $\W \ge \frac{ \epsilon \Delta}{2n}$ Therefore,  the number of choices of $\W$ is $O\left( \frac{1}{\epsilon} \log \frac{n W_{\max}}{\epsilon \Delta} \right)$. For each choice of $\W$, we need to solve $O(n^{1/\epsilon})$ LPs, so that the overall number of LPs is $O\left( \frac{n^{1/\epsilon}}{\epsilon} \log \frac{n W_{\max}}{\epsilon \Delta} \right)$.

Note that $\Delta$ can be exponentially small, and our algorithm for solving the LP in Appendix~\ref{sec:compute} will lose such an additive factor in the objective anyway. Though we omit details, the same techniques work for approximating profit as well as approximating throughput (volume of trade) subject to weak budget balance. If the objective is profit, we can achieve optimal objective by directly rounding the {\em single LP} in Section~\ref{sec:basic} (details similar to Section~\ref{sec:elicited}). If the objective is throughput (or total flow) subject to weak budget balance, then the flow to any open facility is at least $\L$, so that we only need to guess the $\theta$ open facilities in OPT and not the flow to them. This means we  only need to solve $O(n^{1/\epsilon})$ LPs to achieve a $(1+\epsilon)$ approximation to throughput, again relaxing the distance constraint by a constant factor.

\section{Queueing-theoretic Justification: Dynamic Marketplaces}
\label{sec:justif}
%We now take a digression and justify our model. 
In the facility location model discussed above, we imposed a lower bound $\L$ on the flow routed to any facility. In the absence of this constraint, {\em i.e.}, when $\L =0$, the problems are in fact poly-time solvable. For instance, to maximize surplus (gains from trade), we can simply write the standard welfare maximization LP ignoring prices and wages. Its dual yields prices and wages that will be strongly budget balanced. Therefore, the entire hardness of the problem comes from the lower bound constraint.  This begs the question: {\em Why have this constraint at all?}

%Furthermore, the envy-freeness model assumed a weight on each sub-type and imposed a lower bound on weighted flow at each facility. 
 
 \medskip
 \noindent{\bf Dynamic Marketplace Model.}  We now present a  {\em dynamic marketplace} model that provides queueing-theoretic justification for these constraints. This model has the following features:
 
 \begin{itemize}
 \item Buyer and seller types are located in a metric space just as before. 
 \item Buyers and sellers are {\em no longer fluid}. Instead, buyers at node $j$ arrive as a {\tt Poisson} process with rate $d_j F_j(p)$ when quoted price $p$, and when quoted a price $p$; similarly, sellers follow a {\tt Poisson} process with rate $s_j H_j(w)$.
 \item  Each buyer and seller has a private patience level or deadline; if not matched within their deadline, they abandon the system. The platform knows the patience distribution.
 \end{itemize}
 
The stochastic control problem that we term {\em dynamic marketplace problem} can be summarized by two control decisions: 
\begin{enumerate}
\item {\em Pricing decision.} Choose static prices $p_j$ and wages $w_j$ at each node $j \in V$;  and 
\item {\em Scheduling decision.} This matches feasible buyer-seller pairs and removes them from the system.  This decision is dynamic, depending on the entire state of the system as captured by the number of unmatched buyers and sellers at different nodes at any point of time.  
\end{enumerate}
The goal is to design a stochastic control policy that maximizes the long-term average surplus subject to long-term budget balance. We insist all scheduled matches must involve a current buyer and seller with metric distance at most $R$. The key difference 
is in the {\em service availability guarantee}: Given the stochastic nature of our arrivals, %ensuring perfect service availability for arriving agents is infeasible as 
there is always some probability that an incoming buyer or seller exhausts her patience before being matched. A more realistic goal is to design policies that guarantee a minimum level of service availability. We quantify this via the long-term average probability of abandonment of agents.  Formally, given a parameter $\eta > 0$ as input, the goal of the platform is to make the abandonment probability at most $\eta$. 

\iffalse 
In Appendix~\ref{app:queueing1}, we present queueing theoretic micro-foundations to justify this constraint. We show that our models arise naturally from policies for {\em dynamic marketplaces} where buyers and sellers are {\em no longer fluid}. Instead, they arrive according to a {\tt Poisson} process at their respective nodes, which as before are located in a metric space.  Furthermore, each agent has a private {\em patience level}, which is the amount of time they are willing to wait before being matched.  This leads to a new kind of {\em service availability guarantee}. Since agent arrivals are stochastic, there is always some probability that an arriving agent who accepts the price/wage exhausts her patience before being matched and hence {\em abandons} the system. Given a bound $\eta \in [0,1]$, the service availability guarantee now corresponds to ensuring the long-term abandonment probability of agents is at most $\eta$.  %Furthermore, to dissuade strategic agent behavior, these policies need to incentivize agents to make choices according to their true types ({\em i.e.}, values/costs and patience levels). 
Subject to these constraints, the goal of the platform is to set prices/wages to the individual nodes and dynamically match the arriving supply and demand in order maximize long-term expected surplus subject to budget balance. 

%Our formulation shares several features with what we have presented so far:  Buyers and sellers have types and are located in a metric space. Each buyer has a private value for being served, and each seller has a private cost for providing service; the platform knows the distribution of these values. The platform needs to make two policy decisions: what prices/wages to set at each node, and how to match agents who accept the price/wage.  The platform needs to match agents within a certain distance bound. 
\fi

\medskip
\noindent {\bf Scheduling Policies.} Constructing the optimal policy for the dynamic marketplace problem is closely related to several lines of work in {\em dynamic matchings} over a compatibility graph -- in kidney exchanges~\cite{AkbarpourGharan} where patients abandon the exchange if their health fails; in control of matching queues for housing allocation~\cite{caldentey2009fcfs}; and more generally in service system design~\cite{gurvich2014dynamic,adan2012exact,adan2015reversibility}, wherein customers and servers arrive stochastically and are matched according to a compatibility graph.  In all these models, the choice of whom to match an arriving agent to depends on the entire set of agents waiting at different nodes, leading to the ``curse of dimensionality". 

% runs into the {\em curse of dimensionality}, since what agent to match an arriving agent to potentially depends on the entire set of agents waiting at different nodes~\cite{adan2012exact,adan2015reversibility,gurvich2014dynamic}. Our work is more in the spirit of dynamic matching models in kidney exchanges~\cite{AkbarpourGharan}  where patients abandon the exchange if their health fails. There, the focus is on understanding the structure and incentive compatibility of matchings without monetary transfers; in contrast, our applications permit use of prices/wages to optimize profit and match quality.  A related line of work is on the performance analysis and control of matching queues for housing allocation~\cite{caldentey2009fcfs} and more generally service system design~\cite{gurvich2014dynamic,adan2012exact,adan2015reversibility}, wherein customers and servers arrive stochastically and are matched according to a compatibility graph. In these models, the choice of whom to match an arriving agent to depends on the entire set of agents waiting at different nodes, leading to the ``curse of dimensionality". 

Given this curse of dimensionality, we  consider the restricted sub-class of matching policies where the platform creates  facilities in the metric space, and uses each facility to cater to a different set of mutually-compatible agent types. Arriving agents are randomly routed to a compatible facility, where they are queued up to be matched to agents on the other side. The probabilistic routing is fixed over time, and does not depend on the state of the  facilities.

Each  facility maintains a queue of active buyers and sellers that have been assigned there, ignores what location they came from, and matches them up using an optimal scheduling policy for minimizing abandonment rate using only the current state of that particular queue. We will enforce the constraint that for any  facility, the long-term abandonment probability is at most $\eta$, which in turn will ensure the overall abandonment probability is at most $\eta$. %In effect, each facility can be viewed as a single two-sided queue. 

Suppose we assume buyer deadlines are distributed as {\tt Exponential}$(\kappa)$, and seller deadlines are distributed as {\tt Exponential}$(\gamma)$. Though these distributions are known, the scheduling decisions at any facility are made without knowing the patience level of any individual agent. Then, any work-conserving policy (including FIFO) is optimal. If an agent's deadline expires and there is no agent to match it with in the queue, this agent is considered abandoned.  We build on results from queueing theory~\cite{WardGlynn03} to bound this abandonment rate tightly as follows:

\begin{theorem}
\label{thm:queue}
Suppose $\lambda$ and $\mu$ be the (Poisson) arrival rates of buyers and sellers into a  facility. Assume buyer deadlines are distributed as {\tt Exponential}$(\kappa)$, and seller deadlines are distributed as {\tt Exponential}$(\gamma)$. Then the FIFO policy has abandonment rate at most $\eta$ when:
\begin{enumerate}
\item There is {\em flow balance}, that is, $\lambda = \mu$; and 
\item There is a {\em flow lower bound}, that is, $ \lambda \ge \frac{3}{2} \left(\frac{\min(\gamma, \kappa)}{\eta^2}\right)$.
\end{enumerate}
For $\eta \le \frac{1}{6}$, the above conditions are also necessary to a constant factor: For the abandonment probability to be at most $\eta$, it must hold that $\lambda/\mu \in [1-\eta, 1 + \eta]$; and $\min(\lambda, \mu) \ge \frac{1}{14000} \left(\frac{\min(\gamma, \kappa)}{\eta^2}\right)$.
\end{theorem}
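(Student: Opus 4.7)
The plan is to reduce the two-sided queueing dynamics at a single facility to a one-dimensional birth--death chain and then extract the abandonment rate from its stationary distribution. Under any work-conserving FIFO rule, at most one side of the facility has positive queue length at any instant, since every newly arrived buyer (respectively seller) is matched immediately with a head-of-line seller (respectively buyer) whenever one is available. Letting $B(t), S(t)$ count waiting buyers and sellers, the signed imbalance $X(t):=B(t)-S(t)\in\mathbb Z$ captures the whole state, and $X(t)$ is a continuous-time birth--death chain with upward rate $\lambda$, downward rate $\mu$, plus state-dependent reneging: at $n>0$ an extra downward rate $n\kappa$, and at $n<0$ an extra upward rate $|n|\gamma$. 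Detailed balance on each half-line gives $\pi(n)/\pi(n{-}1)=\lambda/(\mu+n\kappa)$ for $n\ge 1$ (and the analogous expression with $\gamma$ for $n\le-1$); under flow balance $\lambda=\mu$ and for $\lambda$ large, these ratios are well approximated by continuous half-Gaussian densities $\propto\exp(-\kappa x^2/(2\lambda))$ for $x>0$ and $\propto\exp(-\gamma x^2/(2\lambda))$ for $x<0$. This is the heavy-traffic diffusion limit of Ward and Glynn~\cite{WardGlynn03} adapted to a two-sided facility.

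For sufficiency, the steady-state abandonment rate equals $A=\kappa\,\E[X^+]+\gamma\,\E[(-X)^+]$, and the abandonment probability is $A/(\lambda+\mu)$. A direct calculation against the half-Gaussian stationary law shows that each of $\kappa\,\E[X^+]$ and $\gamma\,\E[(-X)^+]$ is $\Theta(\sqrt{\lambda\min(\kappa,\gamma)})$, uniformly in the ratio $\kappa/\gamma$, using $(1/\sqrt\kappa+1/\sqrt\gamma)^{2}\ge 1/\min(\kappa,\gamma)$ for the upper bound. Hence the abandonment probability is $O(\sqrt{\min(\kappa,\gamma)/\lambda})$, and plugging in the hypothesis $\lambda\ge\tfrac{3}{2}\min(\kappa,\gamma)/\eta^{2}$ (combined with flow balance, which kills the drift from the arrival difference) yields $A/(\lambda+\mu)\le\eta$, with comfortable slack in the constant to absorb the discrete-to-continuous correction.

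For necessity I split into the two failure modes. If flow balance fails, say $\lambda>\mu$, flow conservation forces the long-run reneging rate of buyers to be at least $\lambda-\mu$ (since matches occur at rate at most $\mu$), so the overall abandonment probability is at least $(\lambda-\mu)/(\lambda+\mu)$; requiring this to be $\le\eta$ pins $\lambda/\mu$ to a window of the form $[1-\Theta(\eta),1+\Theta(\eta)]$. For the flow lower bound, I invert the sufficiency argument: the half-Gaussian approximation supplies a matching \emph{lower} bound $A=\Omega(\sqrt{\lambda\min(\kappa,\gamma)})$, so any policy achieving abandonment probability $\le\eta$ must have $\min(\lambda,\mu)=\Omega(\min(\kappa,\gamma)/\eta^{2})$.

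The main technical obstacle is quantitative: the theorem asks for explicit numerical constants ($3/2$, $1/14000$) rather than mere $\Theta$-bounds, and the half-Gaussian picture is only an asymptotic approximation. The cleanest route, I expect, is not to pass to the continuum at all but to work directly with the exact product formula $\pi(n)/\pi(0)=\prod_{k=1}^{n}\lambda/(\mu+k\kappa)$, sandwich it between elementary geometric and Gaussian-moment comparisons, and then sum against $n$ and $n^{2}$ in closed form. The Ward--Glynn framework organises the discrete-to-continuous exchange and gives the right rate of convergence; the restriction $\eta\le 1/6$ in the necessity part is precisely what makes the resulting non-asymptotic constants track cleanly.
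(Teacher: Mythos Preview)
Your plan is largely correct and close to the paper's, but you are missing one clean identity that the paper exploits and you underestimate the work in the necessity direction.

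For sufficiency, the paper sets up exactly the birth--death chain you describe and uses the exact product formula. The difference is that it does \emph{not} compute $\kappa\,\E[X^+]+\gamma\,\E[(-X)^+]$ against the (approximate) half-Gaussian law. Instead, with $\lambda=\mu$ it telescopes the detailed-balance equations $\alpha_n n\gamma=\lambda(\alpha_{n-1}-\alpha_n)$ and $\beta_n n\kappa=\lambda(\beta_{n-1}-\beta_n)$ to get the exact identity
\[
\sum_{n\ge 1}\bigl(\alpha_n n\gamma+\beta_n n\kappa\bigr)=2\lambda\,q_0,
\]
so the abandonment probability is \emph{exactly} $q_0=\pi(0)$. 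Then $1/q_0$ is read off from the normalisation condition as an infinite sum of the product terms, and the bound $1+x\le e^x$ turns this into the Gaussian integral you anticipated, giving $1/q_0\ge\sqrt{2\lambda/(3\min(\gamma,\kappa))}$ and hence the constant $3/2$ with no diffusion approximation and no discrete-to-continuous correction. Your ``cleanest route'' paragraph is pointed in exactly this direction; the missing step is the observation that abandonment probability $=q_0$, which collapses your two expectation computations into a single normalisation bound.

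For necessity, your flow-balance argument is the paper's. But your plan to ``invert the sufficiency argument'' for the rate lower bound has a gap: sufficiency was proved under $\lambda=\mu$, whereas for necessity you must handle all $\lambda/\mu\in[1-\eta,1+\eta]$, where the residual drift perturbs the stationary law and the symmetric half-Gaussian picture no longer applies directly. The paper first upper-bounds $1/q_0$ by $7/\sqrt{c}$ in the balanced case (matching the lower bound to constants), and then, for $\lambda\neq\mu$, uses a renewal decomposition at the empty state into buyer-excursions and seller-excursions, compares excursion lengths and within-excursion abandonment rates to the balanced case via stochastic dominance, and splits on whether $\gamma>\kappa$ or $\gamma\le\kappa$. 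This is where the loose constant $1/14000$ and the hypothesis $\eta\le 1/6$ are actually used. Your proposal does not account for this step.
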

\begin{proof}
The behavior of a  facility is captured via the following birth-death Markov chain: consider the state-space $\{\ldots,s(2),s(1),0,b(1),b(2),\ldots\}$, where $0=s(0)=b(0)$ denotes the state that the facility is empty, while for any $n\geq1$, the state $b(n)$ denotes that there are $n$ buyers queued up, and state $s(n)$ denote that there are $n$ sellers queued up.
For any $n\geq 1$, the transition rate from $b(n)$ to $b(n+1)$ is $\lambda$, and for $s(n)$ to $s(n+1)$ is $\mu$; on the other hand, the rate of transition from $s(n)$ to $s(n-1)$ is $n \gamma + \lambda$, while from $b(n)$ to $b(n-1)$ is $n \kappa + \mu$. 
Here, the term $n \gamma$ corresponds to the rate of abandonment of sellers as their deadlines expires, and  $n\kappa$ is the rate of abandonment of buyers.

We now show the sufficient conditions, and relegate the proof of the necessary conditions to Appendix~\ref{app:queueing}.  Assume that $\lambda=\mu$. 
Let $q_0$ denote the steady state probability of the queue being empty. Let 
$$ \Pr[\mbox{State } = s(n)] = \alpha_n \qquad \Pr[\mbox{State } = b(n)] = \beta_n$$
where $\alpha_0 = \beta_0 = q_0$. We have the following balance equations:
$$ \alpha_n n \gamma = \lambda ( \alpha_{n-1} - \alpha_n) \qquad \mbox{and} \qquad  \beta_n n \kappa = \lambda ( \beta_{n-1} - \beta_n)$$
Adding these equations, we have:
$ \sum_{n=1}^{\infty} \left( \alpha_n n \gamma +  \beta_n n \kappa \right) = 2 \lambda q_0$.
Note that the LHS here is the total abandonment rate, and since the total arrival rate of agents (buyers and sellers) is $2 \lambda$, this means the abandonment probability is exactly $q_0$. 

Since $ \alpha_n = \frac{\lambda}{\lambda + n \gamma} \alpha_{n-1}$ and $ \beta_n = \frac{\lambda}{\lambda + n \kappa} \beta_{n-1}$, we have by telescoping:
\begin{eqnarray*}
 \alpha_n & = q_0 \frac{\lambda^n}{\prod_{j=1}^{n} (\lambda + j \gamma)} = q_0 \prod_{j=1}^{n} \frac{1}{\left( 1 + j \frac{\gamma}{\lambda} \right)} \\
\beta_n &= q_0  \frac{\lambda^n}{\prod_{j=1}^{n} (\lambda + j \kappa)} = q_0 \prod_{j=1}^{n} \frac{1}{\left( 1 + j \frac{\kappa}{\lambda} \right)} 
\end{eqnarray*}
Since these probability values sum to one, this implies
\begin{equation}
\label{eq:q0} 
\frac{1}{q_0} = 1 +  \sum_{n=1}^{\infty} \left(  \prod_{j=1}^{n} \frac{1}{\left( 1 + j \frac{\gamma}{\lambda} \right)} + \prod_{j=1}^{n} \frac{1}{\left( 1 + j \frac{\kappa}{\lambda} \right)}  \right)
\end{equation}
For given $\kappa$ and $\gamma$, this is an increasing function of $\lambda$. Therefore, $q_0 \le \eta$ translates to a bound of the form  $ \lambda \ge \L$. An upper bound $\L_{\eta}$ on $\L$ can be computed as follows. Let $c = \frac{\min(\gamma,\kappa)}{\lambda}$. Then,
\begin{align*}
\frac{1}{q_0} &\ge  1+ \sum_{n=1}^{\infty} \left(  \prod_{j=1}^{n} \frac{1}{\left( 1 + j \frac{\min(\gamma,\kappa)}{\lambda} \right)} \right) \\
& \ge     \sum_{n=0}^{\infty} e^{- c n^2/2}  - e^{-c} \ge  \int_0^{\infty} e^{- c x^2/2}dx - e^{-c}  =  \sqrt{\frac{\pi}{2c}} - e^{-c} \ge \sqrt{\frac{2}{3c}}
%    \ge   \sum_{n=1}^{\sqrt{1/c}} \frac{1}{\sqrt{e}} \ge \sqrt{\frac{1}{e c}} = \sqrt{\frac{\lambda}{e \gamma}}
\end{align*}
where the second inequality uses $1+x \le e^x$ for all $x \ge 0$. Therefore, if we insist $\sqrt{\frac{2\lambda}{3\min(\gamma,\kappa)}} \ge \frac{1}{\eta}$, this ensures the abandonment probability is at most $\eta$. This translates to the following lower bound:
$$ \lambda \ge \L =  \frac{3}{2}\cdot \frac{\min(\gamma, \kappa)}{\eta^2}$$
This completes the proof of the sufficient conditions. We present the proof of the necessary condition in Appendix~\ref{app:queueing}.
\end{proof}

We have therefore shown that a sufficient condition for bounding the abandonment probability at any  facility by $\eta$ reduces to saying the flow to the facility is balanced, and facility is {\em thick} -- there is a lower bound $\L = \frac{3}{2} \left(\frac{\min(\gamma, \kappa)}{\eta^2}\right)$ on how much demand or supply needs to be routed there. This reduces the {\em dynamic} control problem with atomic agents to a {\em static} problem where demand/supply are fluid -- exactly \TSFL$(\L,R)$ for suitable $\L$ that depends on $\eta$. Note that $\eta$ is an input parameter, so that $\L$ need not be a constant value.

\newcommand{\LL}{\mathcal{L}}
\newcommand{\w}{\mathcal{G}}

\section{Envy-Free Pricing and Profit Maximization}
\label{sec:elicitation}
\label{sec:elicited}
The idea of independently scaling up/down LP variables corresponding to individual facilities is fairly general, and leads naturally to approximation algorithms for more complex variants that are motivated by different scheduling policies for the dynamic marketplace problem. In this section, we present one such formulation that generalizes the model discussed in Section~\ref{sec:oblivious}. In section ~\ref{app:queueing2}, we show that this model corresponds to the setting when the platform uses prices to elicit patience of agents, and uses Earliest Deadline First (EDF) scheduling in each facility. 

We assume each node (type) $j$ of buyer/seller has a collection of subtypes $\S_j$. There is a DAG $G_j(\S_j,E_j)$ on $\S_j$ that captures {\em envy}. If there is an edge $(k_1,k_2) \in E_j$, then sub-type $k_1$ envies sub-type $k_2$. The platform announces a price (resp. wage) $p_{jk}$ (resp. $w_{jk}$) for each sub-type $k \in \S_j$. In order to preserve incentive compatibility, we require that if $(k_1,k_2) \in E_j$, then $p_{jk_1} \le p_{jk_2}$; resp. $w_{jk_1} \ge w_{jk_2}$. This prevents an agent of sub-type $k_1$ from reporting its type to be $k_2$. Note that since the graph $G_j$ is a DAG, such a price (resp. wage) assignment is feasible. We term such an assignment of prices (resp. wages) at each $j$ as a {\em price (resp. wage) ladder}. 

As before, there is a non-increasing demand function $d_{jk} F_{jk}(p_{jk})$ for each buyer sub-type $k \in \S_j$, and a  non-decreasing supply function $s_{jk} H_{jk}(w_{jk})$ for each seller sub-type $k \in \S_j$. Each sub-type $k \in \S_j$ is also associated with a weight $\w_{jk}$. The platform learns which sub-type any agent chooses. 

\medskip \noindent {\bf  Lottery Pricing and Assignment.} 
The platform opens a set of facilities. For each node $j$ and $k \in \S_j$, buyers (resp. sellers) arriving at the node and choosing that type are probabilistically routed to facilities which are within distance $R$ from the node. We assume the platform shows a lottery over price (resp. wage) ladders as follows: For each node $j \in V$ the platform maintains a distribution $\Z_j$ of  facilities within distance bound $R$, and for each facility in this set, it maintains a distribution $\LL_{ij}$ of price (resp. wage) ladders. Given an agent arriving at this node, the platform first chooses a facility $i$ from $\Z_j$, and then a ladder from $\LL_{ij}$ and shows it to the agent. After the agent chooses the price or wage (hence revealing its sub-type), she is routed to facility $i$. We note that the routing policy makes the facility the agent is routed to be independent of the sub-type elicited. Though this assumption is somewhat restrictive, it prevents the agent from choosing a sub-type to optimize for the facility they get assigned to.

\medskip
\noindent {\bf Service Availability Guarantee.} As before, we capture service availability by ensuring that each facility $i$ has balanced supply and demand, and is also sufficiently {\em thick}. However, we now capture thickness by a lower bound $\L$ on the {\em total weight} of the sub-types assigned there. Formally, let $x_{ijk}$ denote the expected flow of sub-type $k \in \S_j$ to facility $i$.

\begin{description}
\item[Flow Balance.] The expected amount of supply and demand assigned there are the same.
\item[Weight Lower Bound.] The expected weight assigned there is large:  $\sum_{j \in V, k \in \S_j} \w_{jk} x_{ijk} \ge \L$.
\end{description} 

\newcommand{\EFFL}{{\sc Envy-Free FL}}

The objective is to maximize the expected profit of the solution.  We term this problem \EFFL$(\L,R)$.  We note that similar ideas can be used to maximize other objectives; we present the profit objective for simplicity. In section ~\ref{app:elicitation}, we prove the following theorem:

\begin{theorem}
\label{thm:elicited}
There is a polynomial time $(4,1)$ approximation for \EFFL$(\L,R)$. That is, we obtain the optimal expected profit by relaxing the distance constraint by a factor of $4$. 
\end{theorem}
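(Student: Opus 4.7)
The plan is to mirror the LP-rounding framework of Sections~\ref{sec:basic}--\ref{sec:rounding}, but to exploit the fact that the profit objective (unlike surplus) does not require the guessing trick of opening the top $\theta$ facilities.

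First, I would write down an LP relaxation of \EFFL$(\L,R)$ with variables $y_i \in [0,1]$ for opening facility $i$, variables $x_{ijk}$ giving the expected flow of sub-type $k \in \S_j$ routed to facility $i$ (arising from the lottery over $\Z_j$ and $\LL_{ij}$), and price/wage indicator variables $\alpha_{jkp},\beta_{jkw}$ at the per-subtype level. Envy-freeness on each ladder is a collection of linear inequalities local to each node $j$: for $(k_1,k_2) \in E_j$, the expected price quoted to $k_1$ is no larger than that quoted to $k_2$, which in the discretized formulation becomes inequalities among the $\alpha$/$\beta$ variables. Flow balance and the \emph{weight} lower bound $\sum_{j,k} \w_{jk} x_{ijk} \ge \L y_i$ are local to facility $i$ and homogeneous in its variables, exactly as in~(\ref{eq:flowbalance0})--(\ref{eq:lowerbound0}). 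The objective is the linear expected profit $\sum_i R_i$.

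Next, I would establish the analogue of the structural characterization in Lemma~\ref{lem:struct0}, but \emph{without} needing the strengthened LP. The key observation is that for a pure profit objective, any open facility with $R_i < 0$ in the LP optimum can simply be closed (set $y_i$ and all its $x_{ijk}$ to zero, and increase the outlier fractions at neighboring nodes accordingly), which is feasible and strictly improves profit. Hence WLOG every fractionally open facility has $R_i \ge 0$. For such a facility, the scaling argument used in Lemma~\ref{lem:struct2} applies directly: uniformly multiply $y_i$ and every $x_{ijk}$ (for this $i$) by the largest factor $\theta > 1$ that keeps all neighboring $\eta_j,\phi_j \le 1$ and $y_i \le 1$; this scales profit up by a factor $\theta$ on facility $i$'s contribution while preserving every facility-local constraint. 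By LP optimality, the process must terminate at either $y_i = 1$ or some neighbor $j$ that becomes fully demand- or supply-utilized. No pairwise exchange argument is needed here, because we never have to balance loss of profit against gain of surplus.

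With this structural property in hand, the two-phase rounding from Section~\ref{sec:rounding} goes through unchanged. Phase 1 takes each untouched, fully-utilized $j$ and collapses every fractionally open $i \in N(j)$ into a single completely-open facility at $j$, summing $y_i$ and all flow variables (which totals at least $1$ by~(\ref{eq:open0})); routing distance grows from $R$ to $2R$. Phase 2 then reassigns each remaining fractional facility $i$ to the completely open facility at a fully-utilized neighbor guaranteed by the structural lemma; routing distance grows to at most $4R$. Both phases preserve flow balance, the weight lower bound, and profit, because we only \emph{relocate} $y$- and $x$-mass together. Finally, an analogue of Lemma~\ref{lem:increase} (using the regularity of the per-subtype demand/supply distributions) lets us collapse each node's lottery to a single ladder without decreasing profit.

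The main obstacle I expect is in setting up the LP so that the lottery over envy-free ladders is faithfully captured by polynomially many variables, and in verifying that the rounding operations can be realized as valid modifications of $\Z_j$ and $\LL_{ij}$ — in particular, that moving a facility $i \to i^*$ does not force us to introduce new ladders that violate the per-node envy constraints. Because the $x_{ijk}$ variables record expected flows \emph{per (facility, subtype)} and the envy constraints are entirely node-local (they never couple variables across facilities), the rounding leaves the price/wage structure untouched, and envy-freeness is preserved for free; making this explicit by writing the rounded solution as a convex combination of ladders is the main bookkeeping step.
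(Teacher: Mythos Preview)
Your high-level strategy matches the paper's exactly: write an LP, observe that for the profit objective any facility with $R_i < 0$ can be closed outright so no strengthened LP or guessing is needed, scale up each remaining fractional facility to obtain the structural lemma, and then run the two-phase consolidation to relax distances to $4R$. Two details differ from the paper and are worth fixing. First, your price variables $\alpha_{jkp}$ carry no facility index, and you resolve your flagged obstacle by asserting the envy constraints are ``node-local.'' The paper instead uses variables $z_{ijkp}$ and imposes the stochastic-dominance (ladder) constraints \emph{separately for each pair $(i,j)$}; this is precisely what makes the uniform scaling of facility $i$'s variables preserve envy-freeness, since both sides of every such inequality for that $(i,j)$ scale by the same factor while the inequalities for other facilities are untouched. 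With only an aggregate node-level ladder constraint on $\alpha_{jkp} = \sum_i z_{ijkp}$, scaling one facility's $z$-variables perturbs the two sides unequally and need not preserve the constraint. Second, your final step collapses each node's lottery to a single ladder via a regularity-based analogue of Lemma~\ref{lem:increase}; the paper does not do this and in fact drops the regularity assumption in this section. It keeps the lottery and realizes it by sampling a single $\alpha \in [0,1]$ uniformly and reading off, for every subtype $k$, the $\alpha$-quantile of the conditional price distribution $z_{ijkp}/\bar{x}_{ij}$; the per-$(i,j)$ stochastic-dominance constraints guarantee that the resulting price vector is a valid ladder for every $\alpha$. Your collapsing step would additionally have to verify that the averaged prices $F_{jk}^{-1}(\hat{q}_{jk})$ still respect the ladder order across subtypes, which is not automatic when the demand curves $F_{jk}$ differ.
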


In the dynamic marketplace setting presented in section ~\ref{app:queueing2}, the sub-types correspond to different deadlines, and the weight of a sub-type is precisely the deadline value. We show there that the weight lower bound corresponds to the condition for the EDF scheduling policy to have low abandonment rate.

%\subsection{Proof of Theorem~\ref{thm:elicited}}
\subsection{Approximation Algorithm for \EFFL$(\L,R)$}
\label{app:elicitation}
Our LP formulation and rounding are similar to the one for \TSFL$(\L,R)$, and we highlight the differences. As before, we assume there is a candidate set $\mathcal P$ and $\mathcal W$ of prices and wages for each node, respectively. The set of all candidate  facilities in the metric space is denoted by $\F$; since we assume the metric space is explicitly specified as input, we set $\F = V$.  For each node $j$, $B_R(j) \subseteq \F$ denotes the set of all the  facilities $i \in \F$ such that $c(i,j) \le R$. For each  facility $i$, define $B_R(i)$ as the set of all the nodes $j \in V$ such that $c(i,j) \le R$. 

\subsubsection{Linear Programming Relaxation}
For each candidate  facility $i \in \F$, let $y_i \in \{0,1\}$ be the indicator variable that a  facility exists at that location in the metric space. These are the only integer variables in our formulation. Variables $x^d_{ij}$ and $x^s_{ij}$ are non-zero only if $y_i = 1$ and $i \in B_R(j)$. In this case, those are respectively the probability that buyers and sellers at node $j$ are routed to  facility $i$. Note that there is some probability that all prices at node $j$ are set to $p_{\max}$, which corresponds to not routing node $j$ anywhere. Let $z_{ijkp}$ be the probability that buyers at node $j$ with sub-type $k \in \S_j$ are assigned to  facility $i$ and offered price $p$. Similarly, $z_{ijkw}$ denotes the probability that sellers at node $j$ with sub-type $k$ are assigned to  facility $i$ and offered wage $w$. 

%Let $z_{jkp} \in \{0,1\}$ be the indicator variable that is $1$ if the price for deadline level $\nu_k \in \S$ at node $j$ (that is, $p_{jk}$) is equal to $p \in \P$. Similarly, let $z_{jkw}$ denote the indicator variable that the wage for deadline $\nu_k \in \S$ is $w \in \W$.  Finally, variable $x_{ijkp}$ is non-zero only if $y_i = 1$ and $z_{jkp} = 1$.  In this case, it is the fraction of  buyers with patience level $\nu_k$ at node $j$ that will be (probabilistically) routed to  facility $i$. We define $x_{ijkw}$ similarly. Note that the $x$ variables can be fractions since they represent how agents at node $j$ with deadline $k$ are probabilistically routed. Further note that the actual rate at which buyers are sent from $j$ to $i$ is a Poisson process with rate $\alpha_{jk} F_j(p) x_{ijkp}$; similarly for sellers.

\medskip
\noindent{\bf Objective and Constraints.} The objective is the same as before.

\begin{equation}
\label{eq:rev} \max\ \ \sum_{j \in V} \sum_{k\in \S_j}  \sum_ {i \in B_R(j)} \left( \sum_{p \in \P} p d_{jk} F_{jk}(p) z_{ijkp}  - \sum _{w \in \W} w s_{jk}H_{jk}(w) z_{ijkw}  \right)
\end{equation}

The following constraints connect the variables together. We present these constraints only for buyers (that is, $p \in \P$); the constraints for sellers is obtained by replacing $p$ and $x^d$ with $w  \in \W$ and $x^s$. Since we route the buyers at node $j$ probabilistically to one of the  facilities, or to no facility by offering all deadlines a price $p_{\max}$:
\begin{equation}
\label{eq:route}
\sum_{i \in B_R(j)} x^d_{ij}  \le  1 \quad \forall j \in V
\end{equation}
Next, a price should be offered to each buyer with sub-type $k$ at node $j$ assigned to facility $i$: 
\begin{equation}
\label{eq:priceEdge}
\sum_{p \in \P} z_{ijkp}  =  x^d_{ij}\quad \forall j\in V, i\in B_R(j), k \in \S_j
\end{equation}  
Next, if demand is fractionally routed from $j$ to $i$, then $i$ should be open and within distance $R$:
\begin{equation}
\label{eq:open}
x^d_{ij}  \le  y_i\quad \forall j \in V, i\in B_R(j)
\end{equation} 

We next enforce that the prices and wages form a distribution over ladders. Note that the policy first chooses the  facility to route to, and then chooses from a distribution over ladders. This reduces to a stochastic dominance condition for the distributions corresponding to $z$:

\begin{eqnarray}
\label{eq:priceladder}
\sum_{p'\le p, p' \in \P} z_{ijkp'} & \le & \sum_{p' \le p, p' \in \P} z_{ijk'p'} \quad \forall p \in \mathcal P, (k,k') \in E_j, \forall j \in V, \forall i \in B_R(j)\\ 
\sum_{w'\le w, w' \in \W} z_{ijkw'} & \ge & \sum_{w' \le w, w' \in \W} z_{ijk'w'} \quad \forall w \in \mathcal W, (k,k') \in E_j, \forall j \in V, \forall i \in B_R(j)
\end{eqnarray}	

Finally, we encode the service availability constraints.  We first capture {\em flow balance} at each  facility: the rate of arrival of buyers and sellers are equal.
\begin{equation}
\label{eq:flowbalance}
\sum_{j \in B_R(i)} \sum_{k \in \S_j, p\in \P} d_{jk} F_{jk}(p) z_{ijkp} =  \sum_{j \in B_R(i)} \sum_{k \in \S_j, w\in \W} s_{jk} H_{jk}(w) z_{ijkw}  \quad \forall i \in \F
\end{equation}
We finally encode {\em weighted flow lower bound} on the total deadline of buyers and sellers at the facility: 

\begin{equation}
\label{eq:lowerbound}
\sum_{j \in B_R(i)} \sum_{k \in \S_j} \w_{jk} \left( \sum_{p\in \P} d_{jk} F_{jk}(p) z_{ijkp} + \sum_{w\in \mathcal W} s_{jk} H_{jk}(w)  z_{ijkw} \right) \ge \L y_i \quad \forall i \in \F
\end{equation}

\subsubsection{Rounding}
If we ignore the integrality constraints on $y_i$, the above is a linear programming relaxation of the problem.  We will now show how to round the resulting solution.

We generalize Lemma~\ref{lem:struct} using the following definitions of {\em fully utilized}. We say that $j \in V$ is {\em fully demand utilized} if $\sum_{i \in B_R(j)} x^d_{ij} = 1$; similarly, it is {\em fully supply-utilized} if $\sum_{i \in B_R(j)} x^s_{ij} = 1$. We say $j$ is {\em partially demand-connected} to facility $i \in \F$ if $x^d_{ij} > 0$, and {\em partially supply-connected} if $x^s_{ij} > 0$. Let $J_D(i)$ denote the set of nodes that are partially demand-connected to $i \in \F$, and $J_S(i)$ be the set that is partially supply-connected.    As before, we define the profit of a  facility  $i \in \F$ as:
$$R_i = \sum_{j \in B_R(i)} \sum_{k\in \S_j}  \left( \sum_{p \in \P} p d_{jk} F_{jk}(p) z_{ijkp}  - \sum _{w \in \W} w s_{jk}H_{jk}(w) z_{ijkw}  \right)$$

\begin{lemma}
In the LP optimum, for any $i \in \F$, $R_i \ge 0$. Further,  if $R_i > 0$, either $y_i = 1$; or there exists $j \in J_D(i)$, s.t. $j$ is {\em fully demand-utilized}; or there exists $j \in J_S(i)$, s.t. $j$ is {\em fully supply-utilized}.
\end{lemma}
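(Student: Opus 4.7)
The plan is to imitate the first half of the proof of Lemma~\ref{lem:struct2}, but in a cleaner way: because the objective here is profit directly (and there is no weak-budget-balance constraint coupling facilities together), we never need the two-facility swap trick, and the proof reduces to two local scaling arguments on a single facility $i$.

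For the first claim that $R_i \ge 0$ in the LP optimum, suppose for contradiction some facility $i$ has $R_i < 0$. Consider the modified solution obtained by ``closing'' $i$: set $y_i \gets 0$, $x^d_{ij}\gets 0$, $x^s_{ij}\gets 0$ for all $j \in B_R(i)$, and $z_{ijkp}\gets 0$, $z_{ijkw}\gets 0$ for all $j,k,p,w$. I would check that this preserves feasibility: constraint~(\ref{eq:route}) can only get slacker; (\ref{eq:priceEdge}) has both sides zero for $i$; (\ref{eq:open}) is trivially satisfied; the ladder constraints~(\ref{eq:priceladder}) have both sides zero for $i$; and most importantly (\ref{eq:flowbalance}) and (\ref{eq:lowerbound}) for facility $i$ are satisfied since both sides vanish. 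The objective improves by $-R_i > 0$, contradicting optimality.

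For the second claim, fix $i$ with $R_i > 0$ and $y_i \in (0,1)$, and assume for contradiction that no $j \in J_D(i)$ is fully demand-utilized and no $j \in J_S(i)$ is fully supply-utilized. I would scale up \emph{all} LP variables attached to $i$ by a common factor $\theta > 1$ defined as
\[
\theta \;=\; \min\!\Bigl(\tfrac{1}{y_i},\; \min_{j \in J_D(i)} \tfrac{1 - \sum_{i' \ne i} x^d_{i'j}}{x^d_{ij}},\; \min_{j \in J_S(i)} \tfrac{1 - \sum_{i' \ne i} x^s_{i'j}}{x^s_{ij}}\Bigr),
\]
which is strictly greater than $1$ by our assumption. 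Specifically, replace $y_i$, $x^d_{ij}$, $x^s_{ij}$, $z_{ijkp}$, $z_{ijkw}$ by $\theta$ times themselves, leaving all variables attached to other facilities untouched. The crucial observation, as in Lemma~\ref{lem:struct2}, is that constraints~(\ref{eq:priceEdge}), (\ref{eq:open}), (\ref{eq:priceladder}), (\ref{eq:flowbalance}), and (\ref{eq:lowerbound}) are \emph{local to a single facility}: both sides scale by $\theta$, so they remain satisfied. Constraint~(\ref{eq:route}) is preserved by the choice of $\theta$ on each $j \in J_D(i) \cup J_S(i)$, and $\theta y_i \le 1$ takes care of $y_i \le 1$. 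The objective gains $(\theta-1)R_i > 0$, again contradicting optimality. Hence one of $y_i = 1$, some $j \in J_D(i)$ is fully demand-utilized, or some $j \in J_S(i)$ is fully supply-utilized must hold.

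The only mild subtlety I anticipate is checking that the ladder constraints~(\ref{eq:priceladder}) really scale cleanly; but this is immediate because for fixed $i,j$ both sides are linear in the $z_{ijk\cdot}$ variables, all of which are scaled by the same $\theta$. Unlike Section~\ref{app:struct}, no ``pair swap'' is needed because the profit objective here is not constrained by a global budget-balance inequality, so the single-facility scaling already yields a contradiction to optimality.
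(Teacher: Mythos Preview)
Your proposal is correct and follows essentially the same approach as the paper's own proof: close a facility with $R_i < 0$ by zeroing its variables, and for a fractionally open facility with $R_i > 0$ violating the condition, uniformly scale up its local variables to obtain a contradiction with optimality. You also correctly flag the one subtlety the paper singles out, namely that the ladder constraints~(\ref{eq:priceladder}) are stated per $(i,j)$ pair and hence scale cleanly, and you correctly explain why the two-facility swap from Lemma~\ref{lem:struct2} is unnecessary in the profit-maximization setting.
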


The proof of the above lemma follows the same argument as Lemma~\ref{lem:struct}: If a facility has negative $R_i$, we can set all its variables to zero without violating any constraints. If the condition in the Lemma is violated for $i \in \F$, then we can increase all variables corresponding to $i$ by the same factor till the condition is satisfied. Since all constraints involve single facilities, this process preserves them while increasing the objective. For this transformation to work, it is crucial Constraints (\ref{eq:priceladder}) are defined separately for each $(i,j)$ pair; in other words, we crucially need to assume the policy chooses a facility first and then chooses a distribution over ladders for that facility.

\medskip 
The rounding now proceeds in the same way as in Section~\ref{sec:oblivious}: In Phase 1, we identify untouched and {\em fully utilized} $j$ and merge all $i$ to which it is partially connected to one facility. Note that the total $y_i$ of these facilities is at least $1$ by the LP constraints. At the end of this phase, we move the remaining partially open $i$ as in Phase 2 of the rounding scheme. This preserves the profit, and satisfies the flow balance and lower bound constraints ($B_R$ is replaced by $B_{4R}$ in the constraints), yielding the following theorem:

\begin{theorem}
\label{thm2}
There is a feasible solution $\{\bar{x}, \bar{y}, \bar{z}\}$ to the above linear program, whose objective is optimal, and all of whose constraints are satisfied. For each $i \in \F$, either $\bar{y}_i = 0$ or $\bar{y}_i \ge 1$.
\end{theorem}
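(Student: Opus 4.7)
The plan is to adapt the two-phase rounding scheme from Section~\ref{sec:rounding} to the envy-free setting, exploiting the fact that every LP constraint except the flow-balance~(\ref{eq:flowbalance}) and weighted lower bound~(\ref{eq:lowerbound}) is local to a single facility $i$. Starting from the LP optimum, I invoke the preceding structural lemma: any facility with $R_i = 0$ may have all its variables zeroed without changing the objective or violating any constraint (the local constraints trivially become ``$0 \le 0$'', while flow balance and the weighted lower bound at other facilities do not depend on the variables of $i$). So I may assume every surviving partially open facility $i$ has $R_i > 0$ and therefore either $y_i = 1$ or a partially connected, fully demand- or supply-utilized neighbor.

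Phase~1 repeatedly selects an untouched fully utilized $j$, say with $\sum_{i \in B_R(j)} x^d_{ij} = 1$ (the supply side is symmetric). Let $N(j) = \{i : x^d_{ij} > 0\}$. I merge every $i \in N(j)$ into a single new facility $i^*$ co-located with $j$ by setting $\bar{y}_{i^*} = \sum_{i \in N(j)} y_i$ and, for each $j' \in V$, $k \in \S_{j'}$, $p \in \P$, $w \in \W$, $\bar{x}^d_{i^*j'} = \sum_{i \in N(j)} x^d_{ij'}$, $\bar{x}^s_{i^*j'} = \sum_{i \in N(j)} x^s_{ij'}$, $\bar{z}_{i^*j'kp} = \sum_{i \in N(j)} z_{ij'kp}$, and $\bar{z}_{i^*j'kw} = \sum_{i \in N(j)} z_{ij'kw}$, while zeroing all variables of the merged facilities. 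Constraint~(\ref{eq:open}) at $j$ combined with full utilization gives $\bar{y}_{i^*} \ge 1$, and every reassigned $j'$ lies within $c(j',i) + c(i,j) \le 2R$ of $i^*$. I mark such $j'$ as touched and repeat until no untouched fully utilized node remains.

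Phase~2 handles each remaining partially open $i$. By the structural lemma $i$ has a partially connected, fully utilized neighbor $j$, and since Phase~1 exhausted the untouched fully utilized nodes, this $j$ has already been touched and therefore routes positive flow to some completely open $i^*$ built in Phase~1. I merge $i$ into $i^*$ additively in the same fashion, which only grows $\bar{y}_{i^*}$. Any $j'$ originally routed through $i$ now reaches $i^*$ within $c(j',i) + c(i,j) + c(j,i^*) \le R + R + 2R = 4R$, which is why $B_R$ is replaced by $B_{4R}$ in the constraints for the rounded solution.

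The main obstacle is verifying that the envy-free price-ladder constraints~(\ref{eq:priceladder}) survive each merge, as this is what distinguishes the argument from the \TSFL{} rounding. Because~(\ref{eq:priceladder}) is a linear stochastic-dominance inequality in $\{z_{ij'k\tau}\}$ stated \emph{separately} for each pair $(i,j')$ and each envy edge $(k,k') \in E_{j'}$, summing over $i \in N(j)$ at fixed $j'$, $k$, $k'$, and threshold produces exactly the corresponding inequality for $i^*$; this step depends crucially on the modelling choice that the platform commits to the routing facility before sampling a ladder. Constraints~(\ref{eq:route}), (\ref{eq:priceEdge}), (\ref{eq:flowbalance}), (\ref{eq:lowerbound}), and the objective~(\ref{eq:rev}) are all additive across facilities and preserved termwise, so the rounded solution attains the LP optimum. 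Finally, no facility is left partially open: any surviving $i$ with $\bar{y}_i \in (0,1)$ would, by the structural lemma, have a touched fully utilized neighbor, and Phase~2 would then have merged $i$ away, a contradiction.
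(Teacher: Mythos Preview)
Your proposal is correct and follows essentially the same two-phase consolidation as the paper's own (very terse) argument for Theorem~\ref{thm2}, which simply says ``the rounding now proceeds in the same way as in Section~\ref{sec:oblivious}'' and asserts that profit, flow balance, and the weighted lower bound are preserved with $B_R$ relaxed to $B_{4R}$. Your write-up in fact supplies the one genuinely new check the paper leaves implicit: that the per-$(i,j)$ stochastic-dominance constraints~(\ref{eq:priceladder}) survive merging because they are linear and stated separately for each facility, so summing them over the merged set yields the constraint for $i^*$; this is precisely the point the paper flags earlier (in the structural lemma's discussion) as ``crucial.''
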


\subsubsection{Final Policy} The final choice of prices and wages, and the routing policy is the following. We present it only for buyers; the policy for sellers is symmetric.

\begin{itemize}
\item At node $j$, choose a facility $i$ with probability $\bar{x}_{ij}$. If no facility is chosen, the price is set to $p_{\max}$. 
\item If facility $i$ is chosen, choose $\alpha$ uniformly at random in $[0,1]$. For each $k \in \S_j$, find $p_k \in \P$ such that $\sum_{p' < p_k, p' \in \P} \frac{\bar{z}_{jkp'}}{\bar{x}_{ij}} \le \alpha$ and $\sum_{p' \le p_k, p' \in \P} \frac{\bar{z}_{jkp'}}{\bar{x}_{ij}} > \alpha$. Post prices $\{p_1,p_2, \ldots, p_K\}$.
\item If the buyer accepts price $p_k$, route her to  facility $i$.
\end{itemize}

Constraints (\ref{eq:priceladder}) imply that regardless of the choice of $\alpha$, the prices $\{p_1,p_2, \ldots, p_K\}$ in the second step form a ladder, so that $p_1 \ge p_2 \ge \cdots \ge p_K$. A similar statement holds for wages. Therefore, the second step produces a lottery over ladders. Further, if $Z_{ijkp}$ denote the event that the price for sub-type $k \in \S_j$ is $p$ and facility $i$ is chosen, then it is an easy exercise to check that $\E[Z_{ijkp}] = \bar{z}_{jkp}$. Therefore, the randomized policy exactly implements the solution found in Theorem~\ref{thm2}, so that it maximizes profit.  Omitting details, this completes the proof of Theorem~\ref{thm:elicited}.
\renewcommand{\L}{\mathbf L}

\subsection{Justification of \EFFL\  via Dynamic Marketplace Model}
\label{app:queueing2}
In this section, we present a dynamic marketplace model that justifies the problem statement of \EFFL. As in Section~\ref{sec:justif},  we assume buyers and sellers have an inherent patience level or deadline. If they are not matched within their deadline, they drop out of the system.  We assume every agent $m$  is associated with a patience level $\nu_m$; unlike Section~\ref{sec:justif}, we do not assume these are Exponentially distributed.  The platform advertises a fixed set of patience levels, or deadlines, denoted by $\S_j = \{\nu_{j1}, \nu_{j2}, \ldots, \nu_{jK}\}$, which is a guarantee on the time by which a buyer or seller choosing that deadline is guaranteed to be matched. We assume $\nu_{j1} \le \nu_{j2} \le \cdots \le \nu_{jK}$.  For simplicity, we use $k \in \S_j$ and $\nu_k \in \S_j$ interchangeably.

\medskip \noindent {\bf  Incentive-compatibility.} %As before, we assume the price (resp. wage) seen by a buyer or seller should not depend on when they arrive in the system. Assuming they cannot report earlier arrival times, they should not have incentive to report a later arrival time. In particular, 
We assume the platform sets a lottery of  prices and wages at each node $j$, that are independent of time. Consider the issue of eliciting deadlines truthfully. Consider buyers first. At node $j$, suppose the platform offers price $p_{jk}$ for deadline $\nu_{jk}$. Every buyer can choose one deadline in $\S_j$, in which case he pays price $p_{jk}$, and is guaranteed to be matched within time $\nu_{jk}$ from his arrival. We assume any buyer  $m$ has very large negative utility for being matched after his patience level $\nu_m$, therefore he will choose a $k$ such that $\nu_{jk} \le \nu_m$. Subject to this, he will choose $k$ with smallest $p_{jk}$, since this maximizes his valuation minus price.   A symmetric model can be posited for sellers, where we replace price with wage, and the seller chooses the largest wage such that the corresponding deadline  is smaller than his own patience level.

Since the goal of the platform is to elicit patience levels truthfully,  the platform chooses a {\em price ladder} $p_{j1} \ge p_{j2} \ge \cdots \ge p_{jK}$ and {\em wage ladder} $w_{j1} \le w_{j2} \le \cdots \le w_{jK}$ at each node $j$. This ensures that agents with $\nu_m \in [\nu_{jk},\nu_{jk+1}]$ report deadline $\nu_{jk}$.

\medskip
Each deadline level $\nu_{jk} \in \S_j$ gets associated with non-increasing demand function $d_{jk} F_{jk}(p_{jk})$, which is the Poisson rate at which buyers $m$ with patience $\nu_m \in [\nu_{jk},\nu_{jk+1}]$ arrive when the price of deadline $\nu_{jk}$ is $p_{jk}$. Similarly, deadline level $k \in \S_j$  is associated with a non-decreasing supply function $s_{jk} H_{jk}(w_{jk})$, which is the Poisson rate at which sellers $m$ with patience $\nu_m \in [\nu_{jk},\nu_{jk+1}]$ arrive when the wage for deadline $\nu_{jk}$ is $w_{jk}$. These deadline levels correspond to the sub-types described before.

\medskip \noindent {\bf Scheduling Policy.}   As in Section~\ref{sec:justif}, the platform opens a set of facilities. For each node $j$ and deadline level $k$, buyers (resp. sellers) arriving at the node and choosing that deadline are probabilistically routed to  facilities which are within distance $R$ from the node. Buyers and sellers arriving at the  facility are queued up, and optimally matched to minimize abandonment. Since the platform knows which deadline was chosen by the agent,  the optimal matching policy is now a variant of Earliest Deadline First (EDF): When the deadline of some buyer (resp. seller) expires, it is matched to that seller (resp. buyer) in the queue whose deadline will expire earliest in the future. If an agent's deadline expires and there is no agent to match it with in the queue, this agent is abandoned. It is an easy exercise to show that this policy maximizes the number of matches made in any  facility.  

As in Section~\ref{sec:justif}, the goal of the platform is to design a joint pricing and scheduling policy to maximize profit, while ensuring bounded match distance and bounded abandonment probability.

\subsubsection{Bounding Abandonment Rate} 
We will now show that the weight lower bound can be interpreted as a sufficient condition for the abandonment rate of the EDF policy to be at most $\eta$, where the weight of a sub-type is simply its deadline value. 

The main technical assumption we require in this part is that the desired abandonment probability, $\eta$ is small, in particular that $\eta \ll \frac{\nu_{\min}}{\nu_{\max}}$.  As noted above, the scheduling policy within a  facility is a variant of EDF. Unlike the {\sc Patience-oblivious}  model where the behavior of a  facility could be modeled as a variant of a $M/M/1$ queue, the optimal abandonment probability in a two-sided EDF queue clearly depends on the entire distribution of deadlines of buyers and sellers, which in turn depends on the pricing scheme and assignment policy. However, we crucially need a closed-form bound on this probability in order to plug into an LP relaxation for the overall problem. We use recent results from queueing due to Kruk {\em et al.}~\cite{KrukLRS03} to construct such a closed-form bound, whose very existence we find non-trivial and surprising!  

Kruk {\em et al.}~\cite{KrukLRS03} present an approximation to the abandonment probability of a one-sided queue $M/M/1$ queue with EDF scheduling. They approximate the queueing process via a reflected Brownian motion. We adapt their result to our setting, and rephrase it below. Consider the queue associated with a  facility. Let $\bar{S}$ denote the average deadline of a seller arriving to this queue, and $\bar{D}$ denote the average deadline of a buyer arriving to the queue.  Note that the distribution of deadlines and the arrival rate depends on the overall pricing and assignment policies. 

\medskip
Recall that we assumed $\eta$ is small, in particular that $\eta \ll \frac{\nu_{\min}}{\nu_{\max}}$. We first enforce that supply and demand arrive to the queue at the same rate; call this rate $\lambda$.   Next suppose w.l.o.g. that $\bar{D} > \bar{S}$. Consider the policy that instantaneously matches arriving sellers to the queued buyer with earliest deadline; if the queue is empty, the seller is abandoned. This exactly mimics a one-sided $M/M/1$ queue with EDF scheduling. We quote the following result informally from~\cite{KrukLRS03}: 

\begin{quote}
Consider a one sided $M/M/1$ queue with arrival rate and service rate equal to $\lambda$. Suppose deadlines of jobs are independently distributed with mean $\bar{D}$, and the scheduler uses the EDF policy. Then holding $\lambda$ and $\frac{\nu_{\max}}{\nu_{\min}}$ fixed, in the regime where $\nu_{\min}$ becomes very large, the abandonment probability approaches $\frac{1}{\lambda \bar{D}}$. 
\end{quote}

Though part of their argument is heuristic, they perform simulations to show that this approximation is indeed accurate. Since we need abandonment probability of $\frac{1}{\lambda \bar{D}}$ to be at most  $\eta \ll 1$, and since we assumed $\frac{\nu_{\min}}{\nu_{\max}} \gg \eta$, this automatically enforces that all deadlines are much larger than the mean inter-arrival time, satisfying their precondition for our setting. 

Since the optimal policy for a two-sided queue only has lower abandonment probability, we use $\frac{1}{\lambda \bar{D}}$ as an upper bound on this quantity. Since we assumed $\bar{D} \ge \bar{S}$, we will instead use  $\frac{2}{\lambda (\bar{D} + \bar{S})}$ as the upper bound, which we will set to be at most $\eta$. 

\medskip
We now show that this is the best possible upper bound that only depends on $\bar{D}$ and $\bar{S}$. Suppose buyers deadlines are deterministic with value $\bar{D}$, and seller deadlines are deterministic with value $\bar{S}$. Then the optimal policy matches without waiting in a FIFO fashion. This means the loss probability assuming the queue has buyers is the same as that of a $M/M/1$ queue with deadlines $\bar{D}$, which from~\cite{Boots} is exactly
$$ P_1 = \frac{1}{1+\lambda \bar{D}}$$
Similarly, when there are sellers in the queue, the loss probability is
$$ P_2 = \frac{1}{1+\lambda \bar{S}}$$
Conditioned on the queue being empty and a buyer arriving, the expected time after which the queue next becomes empty is $T_b = \frac{1}{P_1} = 1 + \lambda \bar{D}$, in which period the loss probability is $P_1$. Similarly, if a seller arrives when the queue is empty, the expected time after which the queue again becomes empty is $T_s = \frac{1}{P_2} = 1 + \lambda \bar{D}$, in which period the loss probability is $P_2$. Since a buyer or seller arrives with equal probability when a queue is empty, the expected loss probability is
$$ P = \frac{T_b P_1 + T_s P_2}{T_b + T_s} = \frac{2}{2 + \lambda \bar{D} + \lambda \bar{S}} \approx \frac{2}{\lambda  \left(\bar{D} + \bar{S} \right)}$$
assuming $\lambda  (\bar{D} + \bar{S}) \gg 1$.

\medskip
In summary, each  facility needs to satisfy the following two sufficient conditions for its abandonment probability to be at most $\eta$:

\begin{enumerate}
\item The rate of arrival of supply and demand should be the same; call this rate $\lambda$.
\item If $\bar{S}$ denote the average deadline of a seller, and $\bar{D}$ denote the average deadline of a buyer, then $\lambda (\bar{D} + \bar{S}) \ge \frac{2}{\eta}$.
\end{enumerate}

Therefore, to reduce the scheduling policy to an instance of \EFFL$(R,\L)$, we set $\w_{jk} = \nu_{jk}$ and $\L = \frac{2}{\eta}$, so that the second condition above translates to the weight lower bound. This justifies the \EFFL$(R,\L)$ problem as capturing the optimal scheduling  policy for the dynamic marketplace problem presented above.

Note that the resulting lower bound on $\lambda$ derived by the above condition is a significant improvement over the patience-oblivious case, since the lower bound now depends on $\frac{1}{\eta}$ instead of $\frac{1}{\eta^2}$. This intuitively means that in order to achieve comparable profit and abandonment probability, we can aim for a higher quality of match by reducing the radius $R$.  A similar observation that even partial information about deadlines significantly reduces abandonment is made in~\cite{AkbarpourGharan}, albeit for a different model.

\section{Conclusions}
\label{sec:open}
Our work is a first step in understanding the problem of jointly pricing and scheduling in dynamic matching markets. We showed that a natural sub-class of two-sided stochastic matching policies can be reduced to novel variants of facility location, yielding approximation algorithms for the joint pricing and scheduling problem in two-sided markets.

We now mention several open questions that arise.  For the facility location problems, there are other variants that we do not yet have good algorithms for. For instance, our model imposes a uniform distance bound the match of any agent. Extending it to average match distance will require new techniques; the basic filtering step in facility location rounding fails in our case since the demand value itself is a variable. It is also an interesting question to extend our techniques to when markets can be priced, and agents choose markets to optimize their utility, extending  techniques for stochastic scheduling in one-sided markets~\cite{Chawla17a,Chawla17b}.

%A more natural cost measure is the average matching distance; however, the basic filtering step in facility location rounding fails in our case since the demand value itself is a variable. This will require new techniques. %Finally, it is not clear how to extend our techniques to handle envy-freeness, where markets can be priced, and agents choose markets to optimize their utility. For instance, each time slot can be a market that is priced, and agents choose . It is not clear how to extend recent techniques for stochastic scheduling in one-sided markets~\cite{Chawla17a,Chawla17b} to two-sided markets.  As a concrete instance, for the model considered in our paper, it is not clear how to design approximation algorithms when each virtual market needs to have a uniform price and wage, even when the assignments of agents to markets is done by the platform. 

For the dynamic marketplace problems, we reduced a special type of scheduling policy to the facility location problems. One can ask: {\em What about approximating the overall optimal policy?}  Such a policy need not use facilities, and is poorly understood even when pricing is not involved (for instance, see~\cite{AkbarpourGharan}). %Given that the only known analytic approach even single-queue policies such as EDF is via diffusion approximations,  the right approach to the entire problem may be via such a technique. However, this in turn requires a simple candidate decision policy for the overall problem, which we don't yet have. 

Our  model here also assumed Poisson arrivals whose rate is constant over time. A different approach is to use online algorithms. In particular, it would be interesting to incorporate pricing and wages into the ``online matching with delays" models considered in~\cite{AzarCK17,EmekKW16}. For instance, if we ignore pricing, and assume agents simply reveal their deadlines, a natural objective is to minimize the sum of metric distances traveled by agents for their match plus the weighted number of agents whose deadlines are missed. For this problem, the techniques in~\cite{AzarCK17,EmekKW16} do not seem to extend. An interesting question is whether a polylogarithmic competitive ratio algorithm exists for this variant. 

%\medskip
\noindent {\bf Acknowledgment: } We thank Sungjin Im and Janardhan Kulkarni for helpful comments on an earlier draft of this paper. Reza Alijani and Kamesh Munagala are supported by NSF grants CCF-1408784, CCF-1637397, and IIS-1447554.
\bibliographystyle{ACM-Reference-Format}
\bibliography{pooling}

\newpage

%--------------------------------------------------------%

\appendix
\section{Omitted Proofs and Examples}

\subsection{Integrality Gap Example in Section~\ref{sec:welfare}}
\label{app:example}
First, we show that the optimum solution can open a  facility with negative profit. To be more specific, for any given constant $c<1$ we present a simple example in which $c$ fraction of the total surplus is generated by a facility with negative profit. Then we use this example to show that the LP has unbounded integrality gap. 

Let $V=\F=\{ v,v'\}$ such that $c(v,v') = \infty$, and $L$ be the lower bound for the total amount of demand (supply) at each open facility. For node $v$, assume $d_v=s_v=L$, the valuation of buyers is uniformly distributed over the interval $[2,3]$, and the cost of sellers is uniformly distributed over the interval $[ 0,1 ]$. For node $v'$, assume $d_{v'}=s_{v'}=L$, the valuation of buyers is uniformly distributed over the interval $[ c'-1,2c'+1 ]$, and the cost of sellers is uniformly distributed over the interval $[ 0,c' ]$ where $c'=\frac{2c}{1-c}$. We claim that the optimum integral solution for this example is to open a  facility at each of the nodes and set the price and wage at node $v$ to $2$ and $1$ respectively, and set the price and wage at node $v'$ to $c'-1$ and $c'$ respectively. 

First, we show that this solution is feasible. At each node the price is not more than the valuation of any arriving buyer. Therefore, all the buyers choose to participate. Similarly, since the wage is not less than the cost of any arriving seller, all the sellers choose to participate. This solution satisfies flow balance for each of the facilities because the volume of sellers and buyers are equal at the corresponding node, and all of them choose to participate. In addition, flow lower bound is also satisfied. Finally, the profit of the facility at $v$ is $d_v$ and the loss of the facility at node $v'$ is $d_{v'}$. Therefore, the total profit is 0 and  profit of the facility at node $v$ compensates for the loss at the other facility.

Now, we show that the surplus of the facility with negative profit is a fraction $c$  of the total surplus. The surplus at node $v$ is $d_v \times (2.5-0.5) = 2L$ and the surplus at node $v'$ is $d_{v'} \times (3c'/2 - c'/2)= c'L$. Therefore, $c' /(c'+2) = c$ fraction of the surplus is generated at node $v'$. 
     
Finally, we need to show that this solution is optimum. The nodes are far from each other and we cannot send the buyers and sellers from different nodes to a common facility. The only option for opening a facility at each of the nodes is to set the price and wage at each node in a way that all the arriving buyers and sellers choose to participate (otherwise, the flow lower bound cannot be satisfied). Therefore, this problem has three feasible integral solutions: no facility is opened, a facility at node $v$ is opened, and a facility at each of the nodes is opened. Note that the solution which only opens a facility at $v'$ is not feasible because it does not satisfy weak budget balance. The surplus of those solutions are 0, $2L$, and $(2+c')L$ respectively. Therefore, the third solution is optimum. Also it is easy to see that this integer optimum solution is also LP optimum solution. The reason is that both facilities generate positive surplus and partially opening any facility by fractionally assigning that node as an outlier results in lower surplus. 

\medskip
\noindent {\bf Integrality Gap.} Now we slightly modify the previous example to show that the LP has unbounded integrality gap. We only change the distribution of the valuation of the buyers at node $v'$.  The valuation of the buyers is now uniformly distributed over the interval $[ c'-1-\epsilon,2c'+1+\epsilon ]$ for a small positive constant $\epsilon$. After this change, the integral solution which opens a facility at each node is not feasible anymore because it violates weak budget balance constraint. Therefore, the optimum integral solution has $2L$ surplus. 

On the other hand we claim that there is a fractional solution which has $(\frac{1}{1+\epsilon} \times  \frac{2c}{1-c} + 2)L$ surplus. Set the price and wage at node $v$ to $2$ and $1$ and open the facility at that node ($y_v=1$). For the node $v'$ we can only open the facility partially. Set $y_{v'}=\frac{1}{1+\epsilon}$ and the price and wage at node $v'$ to $c'-1-\epsilon$ and $c'$ with probability $\frac{1}{1+\epsilon}$ and to $p_{max}$ and 0 with probability $\frac{\epsilon}{1+\epsilon}$. In other words, set $\alpha_{v'1} = \beta_{v'1}=\frac{1}{1+\epsilon}$ and $\alpha_{v'0}=\beta_{v'0}=\frac{\epsilon}{1+\epsilon}$. This solution is feasible and generates $(\frac{1}{1+\epsilon} \times  \frac{2c}{1-c} + 2)L$ surplus, while the optimum integer solution generates only $2L$ surplus. Note that $c$ can be arbitrarily close to 1 and therefore the integrality gap is unbounded.

\subsection{Solving the LP Formulation in Section~\ref{sec:oblivious}}
\label{sec:compute}
We now show how to use the Ellipsoid algorithm to efficiently solve the LP formulation in Section~\ref{sec:oblivious} to arbitrary additive accuracy even when the demand and supply distributions are continuous, so that the sets $\Q_j$ (resp. $\R_j$) are continuous. First we get rid of weak budget balance by take a Lagrangian of surplus and the profit. For any parameter $\lambda \ge 0$, define:
$$ \V^{\lambda}_j(q) = \V_{j}(q) + \lambda d_j q F^{-1}_{j}(q)$$
and 
$$ \C^{\lambda}_j(r) = \C_{j}(r)  + \lambda  s_j r H^{-1}_{j}(r) $$

Since we assumed regular supply and demand distributions, it is easy to show that $\V^{\lambda}_j(q)$ is concave in $q$ and $\C^{\lambda}_j(r)$ is convex in $r$. The Lagrangian objective is then:
$$ \mbox{Maximize}  \sum_{j \in V}  \left( \sum_{q \in \Q_j} \sum_{i \in B_R(j)} z_{ijq} \V^{\lambda}_{j}(q)   - \sum _{r \in \R_j} \sum_{i \in B_R(j)} z_{ijr} \C^{\lambda}_{j}(r)  \right)$$
\[
\begin{array}{rcll}
\sum_{q \in \Q_j} \sum_{i \in \F}  z_{ijq}   & \le &  1 \quad & \forall j \in V \\
\sum_{r \in \R_j} \sum_{i \in \F}  z_{ijr}   & \le &  1 \quad & \forall j \in V \\
\sum_{q \in \Q_j} z_{ijq}  & \le  & y_i  & \forall j \in V, i\in B_R(j) \\
\sum_{q \in \R_j} z_{ijr}  & \le  & y_i  & \forall j \in V, i\in B_R(j) \\
\sum_{j \in B_R(i)} d_j  \sum_{q \in \Q_j} q z_{ijq} & = &  \sum_{j \in B_R(i)} s_j \sum_{ r \in \R_j}  r z_{ijr}  & \forall i \in \F \\
\sum_{j \in B_R(i)} d_j   \sum_{q \in \Q_j}  q z_{ijq}  & \ge & \L y_i & \forall i \in \F \\
z_{ijq}, z_{ijr}, y_i & \ge & 0 & \forall i,j,q,r
\end{array}
\]

The dual is the following:

$$ \mbox{Minimize} \sum_{j \in V} (a_j + b_j)$$
\[
\begin{array}{rcll}
a_j + \eta_{ij} + d_j q ( \zeta_i - \rho_i) & \ge & \V^{\lambda}_j(q) & \forall  j \in V, i \in B_R(j), q \in \Q_j \\
b_j + \theta_{ij} - s_j r \zeta_i  + \C^{\lambda}_j(r) & \ge & 0 &  \forall  j \in V, i \in B_R(j), r \in \R_j \\
\L \rho_i & \ge & \eta_{ij} + \theta_{ij} & \forall j \in V, i \in B_R(j) \\
\eta_{ij}, \theta_{ij},  \rho_i & \ge & 0 & \forall j \in V, i \in B_R(j) 
\end{array}
\]

For fixed dual variables, since $\V^{\lambda}_j(q)$ is concave in $q$ and $\C^{\lambda}_j(r)$ is convex in $r$, it is easy to check that for each $i,j$, the separation oracle either involves maximizing a concave function  in $q$ (for the first set of constraints) or minimizing a convex function  in $r$ (for the second set of constraints). In either case, finding the separating hyperplane involves one-dimensional convex optimization. This implies the LP admits to an efficient additive approximation even for continuous distributions over a bounded domain. We omit the standard details.

\subsection{Proof of Theorem~\ref{thm:queue}: Necessary Condition}
\label{app:queueing}
For the necessary conditions, the first condition is obvious: If $\lambda/\mu \notin [1-\eta, 1+\eta]$, in steady state, a fraction $\eta$ of either buyers or sellers must necessarily be discarded just because an equal number of buyers and sellers are matched. 

Next, we show that if $\lambda = \mu$, the bound on $q_0$ in Equation (\ref{eq:q0}) is tight to a constant factor. Recall we define 
$c = \min(\gamma,\kappa)/\lambda$. First, note that if $c\geq 1$, we have
\begin{align*}
\frac{1}{q_0} &\le 2 \sum_{n=0}^{\infty} \left(  \prod_{j=0}^{n} \frac{1}{\left( 1 + jc\right)} \right) 
\le 2 \sum_{n=0}^{\infty} \left(  \frac{1}{(n+1)!} \right)  = 2\cdot \left(e - 1\right)   
\end{align*}
This gives $q_0\geq 0.29$ which is infeasible as we want $\eta\leq 1/6$. Thus, the only relevant case is $c = \min(\kappa,\gamma)/\lambda\leq 1$, for which we have:
\begin{align*}
\frac{1}{q_0} &\le 2 \sum_{n=0}^{\infty} \left(  \prod_{j=0}^{n} \frac{1}{\left( 1 + jc\right)} \right) \\ 
&\le 2  \left( \sum_{n=0}^{1/\sqrt{c}-1} 1 + \sum_{n=1/\sqrt{c}}^{\infty} \left(\prod_{j=1}^{1/\sqrt{c}} \frac{1}{\left( 1 + jc \right)}\right)\left( \frac{1}{1+\sqrt{c}}  \right)^{n-1/\sqrt{c}} \right) \\
&\le 2  \left( \frac{1}{\sqrt{c}}  + 2^{-\frac{c}{2}\left(\frac{1}{\sqrt{c}}\right)^2} \sum_{i=0}^{\infty} \left( \frac{1}{1+\sqrt{c}}  \right)^i \right) \qquad\mbox{(Since $1/(1+x)\leq 2^{-x}$ for $x\in [0,1]$)}\\
&\le 2  \left( \frac{1}{\sqrt{c}}  + \frac{1}{\sqrt{2}}\cdot \frac{1+\sqrt{c}}{\sqrt{c}}   \right)
\le   \frac{7}{\sqrt{c}}
%\le \frac{4}{\sqrt{c}}
\end{align*}
%Thus in order to get the abandonment probability to be less than $\eta$, we need $\lambda\geq 12\min(\gamma,\kappa)/\eta^2$.

Next suppose $\lambda > \mu$; note that this means $\lambda \in \mu \cdot [1, 1+\eta]$ since $\lambda/\mu \in [1-\eta,1+\eta]$. Now consider the subset of the Markov chain on the states $\{0,b(1),b(2),\ldots\}$, with transition rates from $b(n)$ to $b(n-1)$ is $\mu + n \kappa$, and that from $b(n-1)$ to $b(n)$ is $\lambda > \mu$; we henceforth refer to this as the {\em buyer system}. Note that the abandonment rate of buyers in this system is stochastically dominated by the abandonment rate if $\mu = \lambda$. %Thus from the above computation we get $\lambda\geq 12\min(\gamma,\kappa)/\eta^2$. Similarly, if $\mu>\lambda$, we get that $\mu\geq 12\min(\gamma,\kappa)/\eta^2$, and hence for $\eta\leq 1/6$, we need $\lambda\geq 10\min(\gamma,\kappa)/\eta^2$.

%Moreover, we can also show that a similar condition is required for the {\em average} abandonment rate be less than $\eta$. Again consider the case with $\lambda \in \mu \cdot [1, 1+\eta]$. 

Now, conditioned on the queue being empty and a buyer (resp. seller) arriving, let $T_b$ (resp. $T_s$) denote the expected time after which the queue next becomes empty, and let $R_b$ (resp. $R_s$) denote the abandonment rates in these periods. When the queue is empty, the probability that a buyer arrives is $\frac{\lambda}{\lambda + \mu} \in \left[\frac{1}{2} , \frac{7}{13} \right]$, since $\eta \le 1/6$. Moreover, Wald's identity gives that the overall abandonment rate is by 
$ R\left(\frac{\lambda}{\lambda+\mu}T_b+ \frac{\mu}{\lambda+\mu}T_s\right) = \left(\frac{\lambda}{\lambda+\mu}R_b T_b+ \frac{\mu}{\lambda+\mu}R_s T_s\right)$, and thus
$$ R \ge \frac{\mu}{\lambda} \left(\frac{T_b R_b + T_s R_s}{T_b + T_s}\right) \ge \frac{6}{7}  \times R_b \times \frac{T_b}{T_b + T_s}$$

Now we need to consider two separate cases:
\begin{enumerate}
\item $\gamma>\kappa$: Here, the transition rate from $s(n)$ to $s(n-1)$ is $\lambda + n \gamma$, and that transition rate from $b(n)$ to $b(n-1)$ is $\mu + n \kappa < \lambda + n \gamma$; on the other hand, the rate from $s(n-1)$ to $s(n)$ is $\mu$ and that from $b(n-1)$ to $b(n)$ is $\lambda > \mu$. 
By stochastic dominance, we therefore have $T_s \le T_b$, which means the abandonment rate is:
$$ R \ge  \frac{6}{7}  \times R_b \times \frac{T_b}{T_b + T_s} \ge \frac{3}{7} R_b$$ 
Moreover the abandonment probability in the buyer system is smaller if we assume $\lambda = \mu$, and hence $\frac{R_b}{\lambda} \ge  \frac{1}{7} \sqrt{\frac{\kappa}{\lambda}}$. Combining these bounds ,we have:
$$ R \ge \frac{3}{7} R_b \ge \frac{3}{49} \sqrt{\kappa \lambda}$$
Since the overall arrival rate is at most $2 \lambda$, the abandonment probability is at least $\frac{1}{33} \sqrt{\frac{\kappa}{\lambda}}$.  
\item $\gamma\le\kappa$: In this case, note that $T_b$ is the inverse of the probability the buyer system is in state $b(0)$. Since this probability is maximized when $\lambda = \mu$, we have 
$T_b \ge  \left(\sqrt{\frac{3 \mu}{2\kappa}}\right)$.
Similarly, $T_s$ is maximized when $\lambda = \mu$, so that  $T_s \le 7 \sqrt{\frac{\lambda}{\gamma}}$.
Therefore, assuming $\eta \le 1/6$, we have:
$$\frac{T_s}{T_b} \le \left(\frac{7 \sqrt{2}}{\sqrt{3}}\sqrt{\frac{\kappa}{\gamma}}\right) \le 6 \sqrt{\frac{\kappa}{\gamma}}\Rightarrow \frac{T_b}{T_s+T_b}\geq \frac{1}{1+6 \sqrt{\kappa/\gamma}}\geq \frac{1}{7}\sqrt{\frac{\gamma}{\kappa}}$$
Further, as before, the abandonment rate $R_b \ge  \frac{1}{7}\sqrt{\lambda \kappa}$, and thus 
$$R \ge \frac{6}{7} \times \frac{1}{7} \sqrt{\lambda\kappa} \times \frac{1}{7}\sqrt{\frac{\gamma}{\kappa}} = \frac{6}{343}\sqrt{\lambda \gamma}$$
This means the abandonment probability is at least $\frac{3}{343}\sqrt{\frac{\gamma}{\lambda}}$. 
\end{enumerate}
Combining the two, we see that to guarantee that in order to 
%guarantee abandonment rate of at most $\eta$ for both buyers and sellers, we need $\lambda\geq 10\min(\gamma,\kappa)/\eta^2$, and to 
ensure average abandonment rate is at most $\eta$, we need $\lambda\geq \frac{1}{14000}\min(\gamma,\kappa)/\eta^2$.

\end{document}